
\documentclass[journal,twoside,web]{ieeecolor}  



\overrideIEEEmargins                                      



\usepackage{lcsys}
\usepackage{graphics} 
\usepackage{epsfig} 
\usepackage{mathptmx} 
\usepackage{times} 
\usepackage{amsmath} 
\usepackage{color}
\usepackage{amssymb}  
\usepackage{mathtools}
\usepackage{verbatim}
\usepackage{enumerate}
\usepackage{algorithm}
\usepackage{algorithmic}
\usepackage{graphicx}
\usepackage{subcaption}
\usepackage{overpic}
\usepackage{cite}
\usepackage{multirow}

\newcounter{definition}
\newenvironment{definition}[1]{\refstepcounter{definition}\par\medskip
\noindent 
\textbf{Definition \thedefinition~(#1)} \em \rmfamily}
{\medskip}
\newenvironment{definition*}{\refstepcounter{definition}\par\medskip
\noindent 
\textbf{Definition \thedefinition.} \em \rmfamily}
{\medskip}

\newcounter{theorem}
{\medskip}

\newcounter{proposition}
\newenvironment{proposition}{\refstepcounter{proposition}\par\medskip
\noindent 
\textbf{Proposition \theproposition.} \em \rmfamily}
{\medskip}

\newenvironment{proposition*}[1]{\refstepcounter{proposition}\par\medskip
\noindent 
\textbf{Proposition \theproposition~(#1)} \em \rmfamily}
{\medskip}

\newcounter{lemma}
\newenvironment{lemma}{\refstepcounter{lemma}\par\medskip
\noindent 
\textbf{Lemma \thelemma.} \em \rmfamily}
{\medskip}

\newenvironment{lemma*}[1]{\refstepcounter{lemma}\par\medskip
\noindent 
\textbf{Lemma \thelemma~(#1)} \em \rmfamily}
{\medskip}

\newcounter{problem}
{\medskip}

\newenvironment{problem*}[1]{\refstepcounter{problem}\par\medskip
\noindent 
\textbf{Problem \theproblem~(#1)} \em \rmfamily}
{\medskip}

\newcounter{remark}
\newenvironment{remark}{\refstepcounter{remark}\par\medskip
\noindent 
\textit{Remark \theremark.} \em \rmfamily}
{\medskip}

\definecolor{darkblue}{rgb}{0.0, 0.0, 0.55}

\newcommand{\nt}{\textcolor{black}{N-type}}
\newcommand{\at}{\textcolor{black}{A-type}}
\newcommand{\nts}{\textcolor{black}{N-types}}
\newcommand{\ats}{\textcolor{black}{A-types}}
\newcommand{\be}{\begin{equation}}
\newcommand{\ee}{\end{equation}}
\newcommand{\nn}{\nonumber}

\newcommand{\vv}{\widehat{v}}

\newcommand{\IA}{I^A}
\newcommand{\IN}{I^N}
\newcommand{\Z}{Z}
\newcommand{\z}{z}

\newcommand{\pz}{\overline{p}}

\newcommand{\piz}{\overline{\pi}}

\newcommand{\extendedversion}[1]{}

\newcommand{\rev}[1]{{\color{black}#1} \color{black}}

\title{\LARGE \bf
Strategizing against Q-learners: A Control-theoretical Approach
}

\author{Yuksel Arslantas, Ege Yuceel, and Muhammed O. Sayin 
\thanks{*This work was supported by The Scientific and Technological Research Council of T\"{u}rkiye (TUBITAK) BIDEB 2232-B International Fellowship for Early Stage Researchers under Grant Number 121C124.}
\thanks{Y. Arslantas, E. Yuceel and M. O. Sayin are with the Department of Electrical and Electronics Engineering,
        Bilkent University, Ankara, T\"{u}rkiye.
        Email: {\tt\small yuksel.arslantas@bilkent.edu.tr, ege.yuceel@ug.bilkent.edu.tr, sayin@ee.bilkent.edu.tr}}}

\begin{document}

\maketitle
\thispagestyle{empty}
\pagestyle{empty}

\begin{abstract}
In this paper, we explore the susceptibility of the independent Q-learning algorithms (a classical and widely used multi-agent reinforcement learning method) to strategic manipulation of sophisticated opponents in normal-form games played repeatedly. We quantify how much strategically sophisticated agents can exploit naive Q-learners if they know the opponents' Q-learning algorithm. To this end, we formulate the strategic actors' interactions as a stochastic game (whose state encompasses Q-function estimates of the Q-learners) as if the Q-learning algorithms are the underlying dynamical system. We also present a quantization-based approximation scheme to tackle the continuum state space and analyze its performance for two competing strategic actors and a single strategic actor both analytically and numerically.
\end{abstract}

\begin{IEEEkeywords}
    Reinforcement learning, Game theory, Markov processes
\end{IEEEkeywords}

\section{INTRODUCTION} 

\IEEEPARstart{T}{he} widespread adoption of (reinforcement) learning algorithms in multi-agent systems has significantly enhanced autonomous systems to tackle complex tasks through learning from interactions within a shared environment. However, strategically sophisticated actors can exploit such algorithms to perform sub-optimally \cite{ref:Huang19,ref:Huang21,ref:Deng19,ref:Vundurthy23,ref:Dong22}. A critical question is \textit{how much} a strategically sophisticated agent can manipulate the opponent's decisions for more payoff in a game if the agent is aware of the opponent's learning dynamic. This strategic act can also have a positive impact on the opponent, depending on how aligned their objectives are.

To address the challenge, we approach the problem from a control-theoretical perspective. We focus on the repeated play of general-sum normal-form games played by two types of agents:  \textit{strategically advanced} (A-type) and \textit{naive} (N-type). Each \nt~\textit{naively} follows the widely-used independent Q-learning (IQL) algorithm as if no other agents exist in the environment \cite{ref:Tan93,ref:Hussain23}. On the other hand, an \at~is a strategically sophisticated agent with complete model knowledge, including underlying game structure, \nts' algorithms, and observations of all actions. We consider \ats~with these advanced capabilities to establish a benchmark for quantifying the performance of other strategic actors with limited capabilities. 

We show that \ats~can leverage their model knowledge to control/drive \nts' algorithms (as if the algorithms are some dynamical system) to maximize the discounted sum of the payoffs collected over an infinite horizon. We can model the interactions among \ats~as stochastic games (SG)--a generalization of Markov decision processes (MDPs) to non-cooperative multi-agent environments \cite{ref:Shapley53}. 
However, the SG has a continuous state space. To address this issue, as an example, we present a quantization-based approximation scheme reducing the problem to a finite SG that can be solved via standard dynamic programming methods when there are two competing \ats~or a single \at. We quantify the approximation level of this quantization scheme and examine its performance numerically.

\textbf{Related Works.} Strategizing against \textit{naive} learning algorithms has been studied for no-regret learning and fictitious play dynamics \cite{ref:Deng19,ref:Dong22,ref:Vundurthy23}. In \cite{ref:Deng19}, the authors show that sophisticated agents can secure the Stackelberg equilibrium value against no-regret learners. 
In \cite{ref:Dong22}, the authors study strategizing against a fictitious player in two-agent games with two actions.
In \cite{ref:Vundurthy23}, the authors prove that sophisticated agents with the knowledge of game matrix can attain better payoffs than the one at Nash equilibrium by solving a linear program for each action of the fictitious player to obtain her own mixed strategy, then play a pure action trajectory to satisfy the desired mixed strategy probabilities. On the other hand, here, we focus on strategizing against the widely-studied IQLs, e.g., see \cite{ref:Wunder10,ref:Tampuu17,ref:Calvano20,ref:Klein21,ref:Hussain23}, and bringing the problem to the control-theoretical framework.

Learning algorithms relying on feedback from the environment (such as Q-learning) can be vulnerable to exploitation through the manipulation of the feedback by some adversaries. 
In a broader sense, our approach relates to the literature on reward poisoning \cite{ref:Huang19,ref:Huang21,ref:Lei23,ref:Zhang20}. The main distinction between our work and this body of literature is the presence of an underlying game structure, which constrains the players' impact on each other. Furthermore, while the general setting in reward poisoning is highly competitive (e.g., zero-sum), we address arbitrary general-sum games. The most relevant studies to our work are those by \cite{ref:Huang19,ref:Huang21}, where the authors investigate the impact of manipulating Q-learner algorithms through the falsification of cost signals and its effect on the algorithm's convergence. Correspondingly, robust variants of the Q-learning against such attacks have also been studied extensively \cite{ref:Sahoo20,ref:Nisioti21,ref:Wang20}.

\textbf{Motivating Example.} Recently, IQLs have been observed to lead to \textit{tacit collusion} that can undermine the competitive nature of the markets \cite{ref:Calvano20,ref:Klein21,ref:Hansen21,ref:Banchio22}. For example, in \cite{ref:Banchio22}, the authors study the collusive behavior of IQLs in the widely-studied prisoner's dilemma game (where agents have two actions: `cooperate' and `deflect'). They observe that Q-learners can learn to collude in cooperation even though `cooperate' is always an irrational choice as `deflect' dominates the `cooperate' strategy. New regulations are needed to prevent such collusive behavior \cite{ref:Calvano20}. Here, we numerically examine the susceptibility of IQLs to strategic manipulation in prisoner's dilemma. Such vulnerabilities can incentivize businesses not to use IQLs with the hope of tacit collusion, as their naive approach might be exploited by the other business with strategically sophisticated algorithms.

\textbf{Contributions.} Our main contributions include: 
\begin{itemize}
    \item To the best of our knowledge, our work is \textit{the first} to model the vulnerability of IQLs to strategic manipulations by advanced players as an SG as if the IQLs are the underlying dynamic environment.
    \item  The continuum state space in the SGs poses a technical challenge. However, we demonstrate the Lipschitz continuity of the value functions for the SGs based on the specifics of the IQL updates. This characterization shows that the SG formulated is relatively \textit{well-behaved} as the approximation methods, such as the value-based ones, can perform effectively.
\end{itemize}

\textbf{Organization.} The paper is organized as follows: We formulate the strategic actors' interactions as an SG in Section \ref{sec:preliminary}. We present our solution in Section \ref{sec:approximation}. We provide illustrative examples and conclude the paper, resp., in Sections \ref{sec:example} and \ref{sec:discussion}. Three appendices include the proofs of technical lemmas.

\begin{figure*}[t!]
    \centering
    \includegraphics[width=2\columnwidth]{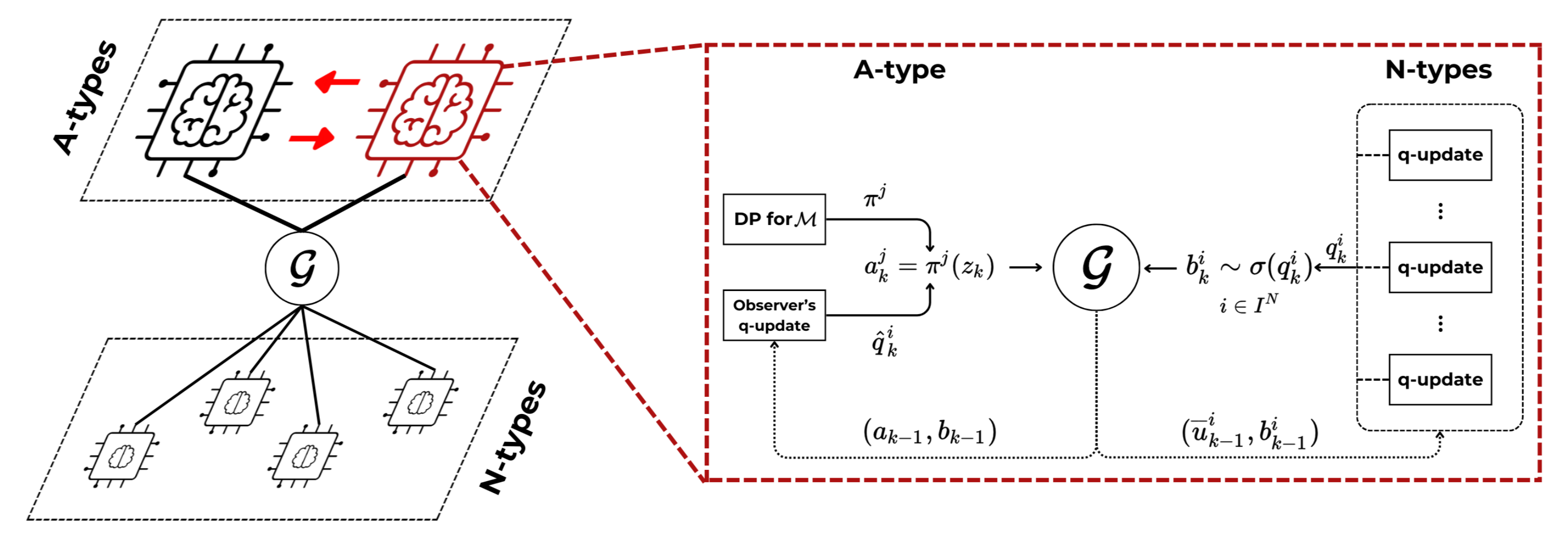}
    \caption{Illustration of the interaction between \nts~and \ats~across the repeated play of the normal-form game $\mathcal{G}$. \ats~compete against each other while considering \nts' q-function estimates as an underlying state of an SG. \ats~can use dynamic programming to solve the SG and use trackers to track \nts' q-function estimates by leveraging the complete model knowledge.}\label{fig:model}
\end{figure*}

\section{STRATEGIC ACTORS AGAINST Q-LEARNERS}\label{sec:preliminary}
Consider a normal-form game $\mathcal{G}$ characterized by the tuple $\langle \{A^j\}_{j\in\IA}, \{B^i\}_{i\in\IN}, \{u^j\}_{j\in\IA}, \{\overline{u}^i\}_{i\in\IN} \rangle$, where $\IA$ and $\IN$ denote, resp., the index set of \at~and \nt~agents.
\begin{itemize}
    \item \textbf{\nts} are \textit{naive} players following the IQL algorithm based on the local information (i.e., payoff received) as if there are no other agents. 
    \item \textbf{\ats} are \textit{advanced} players with complete knowledge of the underlying game and the employed IQLs. They can also access the joint actions of all players.
\end{itemize}
Let $A^j$ and $B^i$ denote the \at~$j$'s and \nt~$i$'s finite action sets. The joint action spaces of \ats~and \nts~are denoted by $A \coloneqq \bigtimes_{j\in\IA} A^j$ and $B\coloneqq \bigtimes_{i\in\IN} B^i$, respectively. The payoff functions are $u^j: A\times B \rightarrow \mathbb{R}$ and $\overline{u}^i: A \times B \rightarrow \mathbb{R}$. \ats~and \nts~are repeatedly playing $\mathcal{G}$ by taking actions simultaneously over stages $k = 0,1,\ldots$ as follows:
    
\textbf{\nt:} Let $q_k^i:B^i\rightarrow\mathbb{R}$ denote \nt~$i$'s q-function estimate at stage $k$. Then, the agent takes action according to the softmax function:
\begin{flalign}\label{eq:softmax}
\sigma(q_k^i)(b^i) = \frac{\exp(q_k^i(b^i)/\tau)}{\displaystyle\sum_{\tilde{b}^i \in B^i} \exp(q_k^i(\tilde{b}^i)/\tau)}\quad\forall b^i\in B^i
\end{flalign}
for some $\tau>0$ controlling the level of exploration and $\sigma(q_k^i)(b^i)\in (0,1]$ denotes the probability that they take action $b^i \in B^i$. Let $a_k=(a_k^j)_{j\in\IA}$ and $b_k=(b_k^i)_{i\in\IN}$ denote the joint actions of \ats~and \nts~at stage $k$, respectively. Given the payoff $\overline{u}_k^i = \overline{u}^i(a_k,b_k)$, \nt~$i$ updates the q-function according to
\begin{flalign}\label{eq:qupdate}
        q_{k+1}^i(b^i) = 
        \begin{cases}
                q_k^i(b^i) + \alpha (\overline{u}_k^i - q_k^i(b^i)) &\; \text{if } b^i=b_k^i \\
                q_k^i(b^i) &\; \text{o.w}
        \end{cases}
\end{flalign}
with some step size $\alpha\in (0,1)$ and the initial vector $q_0^i(b^i) = 0$ for all $b^i$.\footnote{We interchangeably view q-function as a function $q^i:B^i\rightarrow\mathbb{R}$ and a vector $q^i\in \mathbb{R}^{|B^i|}$ since $B^i$ is a finite set.} In other words, \nts~act myopically as if the other players play according to some stationary strategy across the repeated play of the underlying game $\mathcal{G}$.

\textbf{\at:} The advanced players can track \nts' q-function estimates by knowing the deployed IQLs, as described in \eqref{eq:qupdate}, and observing the actions of all agents. Therefore, \ats~do not myopically act as they are aware that the other players do not necessarily play according to some stationary strategy. Instead, they strategically aim to maximize the discounted sum of the payoffs they can collect across the repeated play of the underlying game $\mathcal{G}$, i.e.,
        \begin{flalign}\label{eq:aliceobjective1}
                \max_{\{a_k^i\}_{k=1}^\infty} \mathrm{E} \left[ \sum_{k=0}^{\infty} \gamma^k u^i\left(a_k,b_k\right)\right]
                \end{flalign}
        for some discount factor $\gamma\in (0,1)$.

\at~$i$ takes actions to achieve \eqref{eq:aliceobjective1} based on the history $\{a_0,b_0,\ldots,a_{k-1},b_{k-1}\}$ of the game and the knowledge that $q_k^i$'s evolve according to \eqref{eq:qupdate}. However, this is not a well-defined optimization problem that can be solved introspectively due to its dependence on the other players' play. 

Although the information set $\{a_0,b_0,\ldots,a_{k-1},b_{k-1}\}$ grows unboundedly in time, \nts' actions depend only on their q-function estimates. In other words, conditioned on the current q-function estimates and the current action profile of \ats, the next q-function estimates, and therefore, \nts' play is independent of the game history.
\ats~can leverage this Markov property in their objective \eqref{eq:aliceobjective1} by viewing the q-function estimates as some underlying controlled Markov process. Therefore, we can model \ats' interactions with each other as the following stochastic game whose state is \nts' q-function estimates.

\begin{problem*}{Stochastic Game Formulation}\label{prob:mdp}
    Consider an $|\IA|$-agent SG characterized by the tuple $\mathcal{Z}=\langle \Z,(A^j,r^j)_{j\in\IA},\gamma,p\rangle$. The state space $\Z\coloneqq \prod_{i\in\IN}Q^i$ is a compact set, where $Q^i\subset\mathbb{R}^{|B^i|}$ corresponds to all possible q-function estimates of \nt~$i\in \IN$. The reward function $r^j:\Z\times A\rightarrow \mathbb{R}$ is given by
\be\label{eq:rz}
r^j(\z,a) = \mathrm{E}_{b\sim \overline{\pi}(z)} [u^j(a,b)]\quad\forall (\z,a)\in \Z\times A,
\ee
where 
\be\label{eq:pizz}
\overline{\pi}(\{q^i\}_{i\in\IN})(\{b^i\}_{i\in\IN}) \coloneqq \prod_{i\in\IN}\sigma(q^i)(b^i)
\ee
for all $q^i\in Q^i$, $b^i\in B^i$ and $i\in\IN$.\footnote{By concatenation of $\{q^i\}_{i \in \IN}$, we can construct state $z$. Since $b$ is the action profile, $\overline{\pi}(\{q^i\}_{i\in\IN})(\{b^i\}_{i\in\IN})$ is equivalent to $\piz(z)(b)$.}
The transition kernel $p(\cdot\mid\cdot)$ determines the evolution of transformed states according to the update rule \eqref{eq:qupdate}. 
\end{problem*}

Recall that \ats~know \nts' IQLs \eqref{eq:qupdate} and the underlying game model $\mathcal{G}$ and can observe the actions $(a_k,b_k)$ of all agents. Therefore, they can track the q-function estimates perfectly, although they do not have direct access to them. For example, \ats~can deploy a q-tracker $\widehat{q}_k^i$ specific to each \nt~$i$ and evolving according to
\begin{subequations}\label{eq:qtracker}
\begin{flalign}
    &\widehat{q}_{k+1}^i(b^i_k) = \widehat{q}_k^i(b^i_k) + \alpha (\overline{u}^i (a_k,b_k) - \widehat{q}_k^i(b^i_k)) \\
    &\widehat{q}_{k+1}^i(b^i) = \widehat{q}_k^i(b^i)\quad\forall b^i\neq b_k^i
\end{flalign}
\end{subequations}
with $\widehat{q}_0^i = q_0^i$. Therefore, we have $\widehat{q}_k^i\equiv q_k^i$ for all $k$ and \ats~can construct the state $z_k=(\widehat{q}_k^i)_{i\in\IN}$ of the underlying SG $\mathcal{Z}$.

Let each \at~$j$ follow Markov stationary strategy $\pi^j:Z\rightarrow\Delta(A^j)$ in the SG $\mathcal{Z}$.\footnote{The probability simplex over a finite set $A$ is denoted by $\Delta(A)$.} Given the strategy profile $\pi=(\pi^j)_{j\in\IA}$, \at~$j$'s utility \eqref{eq:aliceobjective1} can be rewritten as
\be
U^j(\pi^j,\pi^{-j}) := \mathrm{E}_{a_k\sim\pi(z_k)}\left[\sum_{k=0}^{\infty} \gamma^k r^j(z_k,a_k)\right]
\ee
for the SG $\mathcal{Z}$. As we pointed out earlier, finding the best strategy in games is generally not a well-defined optimization problem. Therefore, we focus on the important special case of two competing \ats. Let $\IA = \{j_1,j_2\}$ and \ats~have opposite payoffs $u^{j_1}+u^{j_2}\equiv 0$ such that $U^{j_1}+U^{j_2}\equiv 0$.

\begin{algorithm}[t]
    \caption{\at~$j$}
    \label{tab:A}
    \begin{algorithmic}
        \STATE \textbf{input:} $\pi^j$ for the SG $\mathcal{Z}$ and the initials $\{\widehat{q}_{0}^i\}_{i\in\IN}$
        \FOR{each stage $k=0,1,\ldots$}  
        \STATE construct $z_k = \{\widehat{q}_k^i\}_{i\in\IN}$
        \STATE take action $a_k^j\sim \pi^j(z_k)$ simultaneously with the others
        \STATE receive payoff $u^j(a_k,b_k)$ (corresponding to $r^j(z_k,a_k)$ in expectation)
        \STATE observe $(a_k,b_k)$
        \STATE update the q-trackers according to \eqref{eq:qtracker}
        \ENDFOR
    \end{algorithmic}
\end{algorithm}

\begin{algorithm}[t]
    \caption{\nt~$i$}
    \label{tab:N}
    \begin{algorithmic}
    \STATE \textbf{input:} the initial $q_{0}^i$
    \FOR{each stage $k=0,1,\ldots$}  
    \STATE take action $b_k^i\sim \sigma(q_k^i)$ simultaneously with the others
    \STATE receive payoff $\overline{u}_k^i =\overline{u}^i\left(a_k,b_k\right)$
    \STATE update the q-function estimate according to \eqref{eq:qupdate}
    \ENDFOR
    \end{algorithmic}
\end{algorithm}

\begin{definition}{Markov Stationary Equilibrium} We say that a Markov stationary profile $\pi=(\pi^j:Z\rightarrow\Delta(A^j))_{j\in\IA}$ is Markov stationary equilibrium provided that
    \be 
    U^j(\pi^j,\pi^{-j}) \geq U^j(\tilde{\pi}^j,\pi^{-j}) \quad\forall \tilde{\pi}^j,
    \ee
    where $\pi^{-j}=\{\pi^\ell\}_{\ell\neq j}$ \cite{ref:Shapley53}.
\end{definition}

\begin{proposition}
        There always exists Markov stationary equilibrium for two-agent zero-sum SGs $\mathcal{Z}=\langle \Z,(A^j,r^j)_{j\in\IA},\gamma,\pz\rangle$.
\end{proposition}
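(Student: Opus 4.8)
The plan is to adapt Shapley's fixed-point argument for zero-sum stochastic games \cite{ref:Shapley53} to the compact continuum state space $Z$, exploiting the particular structure of the IQL dynamics. First I would record three structural facts. (i) Given a state $z=(q^i)_{i\in\IN}$, an A-type joint action $a$, and an N-type joint action $b$, the update rule \eqref{eq:qupdate} defines a deterministic map $z\mapsto F(z,a,b)$ that is \emph{affine in $z$} for each fixed $(a,b)$ (since $\overline u^i(a,b)$ is then a constant), hence continuous. (ii) Consequently the transition kernel is the finite mixture $p(\cdot\mid z,a)=\sum_{b\in B}\piz(z)(b)\,\delta_{F(z,a,b)}(\cdot)$, and $z\mapsto\piz(z)(b)$ is smooth, being a product of softmaxes. (iii) Taking $Q^i$ to be the box $[\,m_i,M_i\,]^{|B^i|}$ with $m_i=\min(0,\min_{a,b}\overline u^i(a,b))$ and $M_i=\max(0,\max_{a,b}\overline u^i(a,b))$ makes $Z=\prod_{i\in\IN}Q^i$ compact, contains the initial $z_0=\mathbf 0$, and is invariant under every $F(\cdot,a,b)$, because each coordinate of the update is a convex combination of its old value and $\overline u^i(a,b)$.

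Next I would introduce the Shapley operator $T$ on $C(Z)$, the Banach space of continuous functions on $Z$ with the sup norm, by
\[
(Tv)(z)=\operatorname{val}\Big\{\, r^{j_1}(z,a)+\gamma\textstyle\sum_{b\in B}\piz(z)(b)\,v\big(F(z,a,b)\big)\,\Big\}_{a\in A},
\]
where the braces denote the $|A^{j_1}|\times|A^{j_2}|$ matrix game indexed by $a=(a^{j_1},a^{j_2})\in A$ and $\operatorname{val}$ its minimax value over $\Delta(A^{j_1})\times\Delta(A^{j_2})$ (which exists by von Neumann's theorem). Two routine checks follow. First, $T$ maps $C(Z)$ into $C(Z)$: each matrix entry is continuous in $z$ because $r^{j_1}(\cdot,a)$ is (by \eqref{eq:rz} and continuity of $\piz$), $z\mapsto v(F(z,a,b))$ is (composition of continuous maps, using (i)), and $\operatorname{val}$ is $1$-Lipschitz in the sup norm of the entries. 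Second, $T$ is a $\gamma$-contraction: for $v_1,v_2\in C(Z)$ the two payoff matrices differ entrywise by at most $\gamma\|v_1-v_2\|_\infty$, so $\|Tv_1-Tv_2\|_\infty\le\gamma\|v_1-v_2\|_\infty$. By the Banach fixed-point theorem $T$ has a unique fixed point $v^\star\in C(Z)$.

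Finally I would extract and verify the equilibrium profile. For each $z$ let $\mathrm{OPT}(z)\subseteq\Delta(A^{j_1})\times\Delta(A^{j_2})$ be the set of saddle points of the matrix game defining $(Tv^\star)(z)=v^\star(z)$; it is nonempty, compact, and convex, and since its defining linear inequalities have coefficients continuous in $z$, the correspondence $z\mapsto\mathrm{OPT}(z)$ is upper hemicontinuous, hence measurable, so a measurable selection $z\mapsto(\pi^{j_1}(z),\pi^{j_2}(z))$ exists by Kuratowski--Ryll-Nardzewski. This defines Markov stationary strategies. A standard verification argument then closes the proof: substituting $(\pi^{j_1},\pi^{j_2})$ back into the fixed-point identity and iterating the resulting policy-evaluation contraction gives $U^{j_1}(\pi^{j_1},\pi^{j_2})=v^\star(z_0)$; and for any, even history-dependent, deviation $\tilde\pi^{j_1}$ the sequence $W_k=\mathrm E\big[\sum_{t<k}\gamma^t r^{j_1}(z_t,a_t)+\gamma^k v^\star(z_k)\big]$ is non-increasing, because $\pi^{j_2}(z_k)$ secures at most $v^\star(z_k)$ against every action of $j_1$; letting $k\to\infty$ yields $U^{j_1}(\tilde\pi^{j_1},\pi^{j_2})\le v^\star(z_0)=U^{j_1}(\pi^{j_1},\pi^{j_2})$, and the symmetric inequality for $j_2$ follows from $U^{j_1}+U^{j_2}\equiv 0$. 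Hence $(\pi^{j_1},\pi^{j_2})$ is a Markov stationary equilibrium.

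I expect the \textbf{main obstacle} to lie not in the contraction step, which is routine once the mixture (Feller) structure of $p$ is observed, but in the two places where the continuum state space bites: (a) showing that $T$ genuinely preserves \emph{continuity} rather than merely boundedness, which relies on the explicit affine form of $F(\cdot,a,b)$ and the smoothness of the softmax; and (b) producing a \emph{measurable} — as opposed to merely pointwise — selection of optimal matrix-game strategies and then carrying the one-stage-deviation/telescoping verification through against arbitrary non-stationary deviations.
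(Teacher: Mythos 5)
Your proof is correct, but it takes a much more self-contained route than the paper, which disposes of the proposition in one sentence by citing \cite[Theorem 6.2.12]{ref:Puterman14} together with three observations: the minimax operator is non-expansive like the max operator, $Z$ is Polish (a compact subset of $\mathbb{R}^{|B|}$), and the action sets are finite and state-invariant. What you have done is essentially unpack that citation into the full Shapley argument: you define the minimax dynamic-programming operator explicitly, verify that it is a $\gamma$-contraction on $C(Z)$, and then handle the two steps the citation silently absorbs --- measurable selection of optimal one-shot strategies and the verification/telescoping argument against arbitrary deviations. The genuinely additional content in your write-up is the set of model-specific regularity facts (the IQL update $F(\cdot,a,b)$ is affine hence continuous, the kernel is a finite mixture of point masses weighted by the smooth softmax, and a suitable box $Z$ is compact and forward-invariant); the paper asserts compactness of $Z$ without exhibiting an invariant set, so your item (iii) is a useful justification of that standing assumption rather than a redundancy. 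The trade-off is the usual one: the paper's citation-based proof is short but leaves the reader to check that the hypotheses of the referenced MDP theorem really transfer to the zero-sum continuum-state setting (continuity of $r^j$ and Feller continuity of $p$ in particular, which the paper only establishes later via Lemmas \ref{lem:pi}--\ref{lemma:prob_bound}), whereas your construction is longer but makes every hypothesis explicit and additionally yields the stronger conclusion that the stationary equilibrium is secure against history-dependent deviations. One minor point of care: in the measurable-selection step you should note that upper hemicontinuity plus compact values is what gives weak measurability of $z\mapsto\mathrm{OPT}(z)$ before Kuratowski--Ryll-Nardzewski applies; as stated the logical chain ``upper hemicontinuous, hence measurable'' is correct here but worth a sentence of justification.
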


\begin{proof}
The proof follows from \cite[Theorem 6.2.12]{ref:Puterman14} based on the observation that the minimax function is non-expansive like the maximum function and the set of all possible q-function estimates, $Z$, is a Polish space as a compact subset of $\mathbb{R}^{|B|}$ and the set of actions, $A$, is finite and state-invariant. 
\end{proof}

\begin{remark}\label{rem:single}
If there is only a single \at, then the SG $\mathcal{Z}$ reduces to an MDP, for which there also exists a Markov stationary solution.
\end{remark}

Given the Markov stationary strategy $\pi^j$ for the SG $\mathcal{Z}$, \at~$j$ interacts with other agents as illustrated in Fig. \ref{fig:model}. Detailed descriptions of the \ats' and \nts' interactions across the repeated play of the underlying game $\mathcal{G}$ are tabulated, resp., in Algorithms \ref{tab:A} and \ref{tab:N}.

\section{SOLVING THE STOCHASTIC GAME}\label{sec:approximation}

A critical challenge for finding the best strategy in the two-agent zero-sum SG $\mathcal{Z}$ is the continuum state space $Z$. Based on the specifics of the IQLs, we can show that the SG $\mathcal{Z}$ is relatively well-behaved for effective approximations. Particularly, let $v_{\kappa}^j:{Z}\rightarrow \mathbb{R}$ denote the values of the states in {$Z$} at the initial stage of a $(\kappa+1)$-length horizon for the best decision rules in $\mathcal{Z}$. More explicitly, for each {$z\in Z$}, we have
\begin{subequations}\label{eq:v}
\begin{flalign}
&v_{\kappa}^{j_1}({z}) \coloneqq \max_{\mu^{j_1}}\min_{\mu^{j_2}} \;\mathrm{E}_{(a^{j_1},a^{j_2})\sim (\mu^{j_1},\mu^{j_2})}[Q_{\kappa}^{j_1}(z,a^{j_1},a^{j_2})],\\
&v_{\kappa}^{j_2}({z}) \coloneqq \max_{\mu^{j_2}}\min_{\mu^{j_1}} \;\mathrm{E}_{(a^{j_1},a^{j_2})\sim (\mu^{j_1},\mu^{j_2})}[Q_{\kappa}^{j_2}(z,a^{j_1},a^{j_2})],
\end{flalign}
\end{subequations}
where
\be
Q^{j}_{\kappa}(z,a) \coloneqq r^j({z},a) + \gamma \int_{{Z}} v_{\kappa-1}^j({\tilde{z}})p(d{\tilde{z}}\mid {z},a)
\ee
for all $\kappa>0$, $j\in\IA$ and $(z,a)\in Z\times A$. The initial $Q_0^j({z},a) = r^j({z},a)$ for all $(z,a)$.

\begin{proposition}\label{pro:value_lip}
    The minimax value function $v_\kappa^j(z)$ for each $j\in\IA$ is bounded by 
    \be
    M_{\kappa}:= \frac{1-\gamma^{\kappa+1}}{1-\gamma}\max_{(\tilde{j},a,b)}|u^{\tilde{j}}(a,b)|
    \ee
    for all $z\in Z$ and $\kappa \geq 0$. The value function is also $L_\kappa$-Lipschitz with respect to $\|\cdot\|_1$, where 
    \be
    L_{\kappa} = \gamma L_{\kappa-1} + L_0\frac{1-\gamma^{\kappa+1}}{1-\gamma}
    \ee
    for all $\kappa> 0$ and $L_0 \coloneqq \frac{1}{\tau} \sqrt{|B|} \max_{(\tilde{j},a,b)} |u^{\tilde{j}}(a,b)|$.
\end{proposition}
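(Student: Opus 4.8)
The plan is to prove both assertions at once by induction on $\kappa$, after two simplifications. First, since the N-types draw $b\sim\piz(z)$ and then update each $q^i$ deterministically via \eqref{eq:qupdate}, the transition kernel is a finite mixture of point masses, $p(\cdot\mid z,a)=\sum_{b\in B}\piz(z)(b)\,\delta_{F(z,a,b)}(\cdot)$, where $F(\cdot,a,b)$ is the joint q-update map; consequently $Q_\kappa^j(z,a)=\sum_{b\in B}\piz(z)(b)\big[u^j(a,b)+\gamma\,v_{\kappa-1}^j(F(z,a,b))\big]$ (with the convention $v_{-1}^j\equiv 0$, matching $Q_0^j=r^j$). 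Second, for any two matrix games on the same finite, state-invariant action sets the minimax value is $1$-Lipschitz in the payoff entries with respect to the entrywise sup-norm — the non-expansiveness already used for the existence result — because $\mathrm{E}_{\mu}[P]-\mathrm{E}_{\mu}[P']\le\max_a|P_a-P'_a|$ for every mixed profile $\mu$. Comparing the matrix game $Q_\kappa^j(z,\cdot)$ with $Q_\kappa^j(z',\cdot)$ and with the all-zero game yields $|v_\kappa^j(z)-v_\kappa^j(z')|\le\max_{a\in A}|Q_\kappa^j(z,a)-Q_\kappa^j(z',a)|$ and $|v_\kappa^j(z)|\le\max_{a\in A}|Q_\kappa^j(z,a)|$, so it suffices to control $Q_\kappa^j$. (For $j=j_2$ the outer $\max$ and $\min$ are interchanged, which affects nothing below.)

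\emph{Boundedness.} Write $U:=\max_{(\tilde j,a,b)}|u^{\tilde j}(a,b)|$, so $|r^j|\le U$ and thus $|Q_0^j|\le U=M_0$, $|v_0^j|\le M_0$. If $|v_{\kappa-1}^j|\le M_{\kappa-1}$, the mixture form gives $|Q_\kappa^j|\le U+\gamma M_{\kappa-1}$, hence $|v_\kappa^j|\le U+\gamma M_{\kappa-1}$, and a one-line geometric-sum identity shows $U+\gamma M_{\kappa-1}=M_\kappa$, closing this induction.

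\emph{Lipschitz continuity.} Besides the inductive hypothesis $|v_{\kappa-1}^j(z)-v_{\kappa-1}^j(z')|\le L_{\kappa-1}\|z-z'\|_1$, I would use two auxiliary estimates in the $1$-norm: (a) the policy map is Lipschitz, $\|\piz(z)-\piz(z')\|_1\le\frac{\sqrt{|B|}}{\tau}\|z-z'\|_1$, which follows from a bound on the softmax Jacobian (its $\ell_1\!\to\!\ell_1$ operator norm is $O(1/\tau)$) together with the subadditivity of the $\ell_1$ distance under products, since $\piz(z)(b)=\prod_{i\in\IN}\sigma(q^i)(b^i)$; and (b) the q-update map is non-expansive, $\|F(z,a,b)-F(z',a,b)\|_1\le\|z-z'\|_1$, because by \eqref{eq:qupdate} it is affine in $z$, scaling one coordinate of each $q^i$ by $1-\alpha\in(0,1)$ and fixing the rest. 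For fixed $a$ I then split $Q_\kappa^j(z,a)-Q_\kappa^j(z',a)$ into a term in which only the mixing weights change, $\sum_b(\piz(z)(b)-\piz(z')(b))\big[u^j(a,b)+\gamma v_{\kappa-1}^j(F(z,a,b))\big]$, bounded by $(U+\gamma M_{\kappa-1})\frac{\sqrt{|B|}}{\tau}\|z-z'\|_1$ via (a) and the boundedness step, plus a term in which only the evaluated values change, $\gamma\sum_b\piz(z')(b)\big[v_{\kappa-1}^j(F(z,a,b))-v_{\kappa-1}^j(F(z',a,b))\big]$, bounded by $\gamma L_{\kappa-1}\max_b\|F(z,a,b)-F(z',a,b)\|_1\le\gamma L_{\kappa-1}\|z-z'\|_1$ via the inductive hypothesis and (b). Substituting $M_{\kappa-1}=U\frac{1-\gamma^{\kappa}}{1-\gamma}$ and $L_0=\frac{\sqrt{|B|}}{\tau}U$, the coefficient collapses to exactly $\gamma L_{\kappa-1}+L_0\frac{1-\gamma^{\kappa+1}}{1-\gamma}=L_\kappa$; taking the max over $a$ and invoking the minimax non-expansiveness gives the claim at level $\kappa$, and the base case $\kappa=0$ is the same split with only the first term, yielding $|v_0^j(z)-v_0^j(z')|\le U\|\piz(z)-\piz(z')\|_1\le L_0\|z-z'\|_1$.

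\emph{Main obstacle.} The one genuinely computational ingredient is estimate (a): obtaining a clean, dimension-explicit Lipschitz constant for the softmax in the $1$-norm and propagating it through the product form of $\piz$ so the constant stays at most $\sqrt{|B|}/\tau$ (rather than growing with $|\IN|$); this is the natural content of one of the promised appendix lemmas. The remaining pieces — the affineness, hence non-expansiveness, of the IQL update \eqref{eq:qupdate}, the mixture-of-Diracs form of the kernel, the geometric-sum bookkeeping, and the non-expansiveness of the minimax operator — are routine, and it is precisely this non-expansiveness of $F$ combined with the discount $\gamma<1$ that keeps the recursion $L_\kappa=\gamma L_{\kappa-1}+L_0\frac{1-\gamma^{\kappa+1}}{1-\gamma}$ (and hence the state values) uniformly Lipschitz as $\kappa\to\infty$.
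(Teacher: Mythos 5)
Your proof is correct and follows essentially the same route as the paper: induction on $\kappa$ using the non-expansiveness of the minimax operator, the $\tfrac{\sqrt{|B|}}{\tau}$-Lipschitz continuity of the softmax policy map (the paper's Lemma \ref{lem:pi}), and the non-expansiveness of the Q-update map $F$ in $\|\cdot\|_1$ (the paper's bound \eqref{eq:qplusdiff}). The only difference is cosmetic — you add-and-subtract so that both policy-perturbation contributions (on $u^j$ and on $\gamma v_{\kappa-1}^j$) are grouped into one term weighted by $U+\gamma M_{\kappa-1}=M_\kappa$, whereas the paper splits them between its reward-Lipschitz lemma and its transition-kernel lemma; both groupings collapse to the identical constant $L_\kappa$.
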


{
\begin{proof}
    The boundedness follows from the definitions \eqref{eq:rz} and \eqref{eq:v} since we can bound $v_{\kappa}^j(\cdot)$ by
    \begin{align}
        |v_{\kappa}^j({z})|&\leq \sum_{k=0}^{\kappa} \gamma^k \max_{({z},a)}|r^j({z},a)|\nn\\
        &\leq \sum_{k=0}^{\kappa} \gamma^k \max_{(\tilde{j},a,b)}|u^{\tilde{j}}(a,b)|\nn\\
        &\leq \frac{1-\gamma^{\kappa+1}}{1-\gamma}\max_{(\tilde{j},a,b)}|u^{\tilde{j}}(a,b)|=M_{\kappa}.\label{eq:vbound}
        \end{align}
        
Next, we first present the following lemma playing an important role in showing the Lipschitz continuity of the value function.

\begin{lemma}\label{lem:pi}
The joint strategy $\overline{\pi}(\cdot)$ of \nts, as described in \eqref{eq:pizz}, satisfies
\be
\|\overline{\pi}(z) - \overline{\pi}(\overline{z})\|_1 \leq \frac{\sqrt{|B|}}{\tau}\cdot \|z-\overline{z}\|_1
\ee
for all $z,\overline{z} \in Z$.
\end{lemma}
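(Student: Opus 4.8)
The plan is to exploit the product structure of $\overline{\pi}$ in \eqref{eq:pizz} to reduce the claim to a Lipschitz estimate for a single softmax map, and to pay the $\sqrt{|B|}$ factor only when passing from the $\ell_2$ norm back to the $\ell_1$ norm. First I would record the elementary tensorization fact that $\ell_1$ distance is subadditive over the factors of a product distribution: writing $\overline{\pi}(z) - \overline{\pi}(\overline{z})$ as the telescoping sum $\sum_{i\in\IN}\bigl(\bigotimes_{\ell<i}\sigma(\overline{q}^{\ell})\bigr)\otimes\bigl(\sigma(q^i)-\sigma(\overline{q}^i)\bigr)\otimes\bigl(\bigotimes_{\ell>i}\sigma(q^{\ell})\bigr)$, and using that the $\ell_1$ norm of a tensor product of signed measures is the product of the $\ell_1$ norms while each $\sigma(\cdot)$ is a probability vector of unit $\ell_1$ norm, one obtains $\|\overline{\pi}(z)-\overline{\pi}(\overline{z})\|_1 \le \sum_{i\in\IN}\|\sigma(q^i)-\sigma(\overline{q}^i)\|_1$.

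The next step is the single-player estimate $\|\sigma(q^i)-\sigma(\overline{q}^i)\|_2 \le \tfrac1\tau\|q^i-\overline{q}^i\|_2$. For this I would compute the softmax Jacobian, $J_\sigma(q) = \tfrac1\tau\bigl(\operatorname{diag}(\sigma(q)) - \sigma(q)\sigma(q)^{\top}\bigr)$, and note that $\operatorname{diag}(\sigma)-\sigma\sigma^{\top}$ is the covariance matrix of the indicator vector $e_J$ with $J\sim\sigma$, so for any unit vector $v$ one has $v^{\top}(\operatorname{diag}(\sigma)-\sigma\sigma^{\top})v = \operatorname{Var}_{j\sim\sigma}(v_j) \le \sum_j \sigma(q)(j)\,v_j^2 \le \|v\|_\infty^2 \le 1$; hence its spectral norm is at most $1$, so $\|J_\sigma(q)\|_{2\to2}\le\tfrac1\tau$ uniformly in $q$, and integrating along the segment $q_t=(1-t)\overline{q}^i+tq^i$ (legitimate since $\sigma$ is smooth on all of $\mathbb{R}^{|B^i|}$) yields the bound.

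It then remains to assemble the pieces. By Cauchy--Schwarz on $\mathbb{R}^{|B^i|}$, $\|\sigma(q^i)-\sigma(\overline{q}^i)\|_1 \le \sqrt{|B^i|}\,\|\sigma(q^i)-\sigma(\overline{q}^i)\|_2 \le \tfrac{\sqrt{|B^i|}}{\tau}\|q^i-\overline{q}^i\|_2 \le \tfrac{\sqrt{|B^i|}}{\tau}\|q^i-\overline{q}^i\|_1$; summing over $i\in\IN$, using $|B^i|\le\prod_{\ell}|B^{\ell}|=|B|$ and $\|z-\overline{z}\|_1 = \sum_{i\in\IN}\|q^i-\overline{q}^i\|_1$, gives $\|\overline{\pi}(z)-\overline{\pi}(\overline{z})\|_1 \le \tfrac{\sqrt{|B|}}{\tau}\|z-\overline{z}\|_1$. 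An alternative, more direct route differentiates the product itself, using $\partial \overline{\pi}(z)(b)/\partial q^i(\beta) = \tfrac1\tau\,\overline{\pi}(z)(b)\,(\mathbf{1}[\beta=b^i]-\sigma(q^i)(\beta))$, bounds the $\ell_1\to\ell_2$ operator norm of the full Jacobian by exploiting that the remaining players' marginals sum to one, and then applies Cauchy--Schwarz; this bypasses the tensorization lemma at the price of a messier norm computation.

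I do not expect a genuine obstacle here; the one thing to be careful about is the bookkeeping of norms so that the final constant comes out to exactly $\sqrt{|B|}/\tau$ rather than a weaker bound carrying a full factor of $|B|$. The softmax Lipschitz step must be done in $\ell_2$ (where the constant is the clean $1/\tau$), and the single $\sqrt{\cdot}$ loss from Cauchy--Schwarz should be incurred once and on one player's action set $B^i$ at a time, not on all of $B$ simultaneously; the tensorization identity and the spectral-norm bound for the softmax Jacobian are both standard.
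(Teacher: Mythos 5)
Your proof is correct and follows essentially the same route as the paper's: telescope the product over players to reduce to $\sum_{i}\|\sigma(q^i)-\sigma(\overline{q}^i)\|_1$, apply the $\tfrac{1}{\tau}$-Lipschitz bound for a single softmax in $\ell_2$, and pay the single $\sqrt{|B^i|}\le\sqrt{|B|}$ factor via Cauchy--Schwarz before returning to $\|z-\overline{z}\|_1$. The only difference is that you derive the softmax Lipschitz constant from the Jacobian $\tfrac{1}{\tau}(\operatorname{diag}(\sigma)-\sigma\sigma^{\top})$ having spectral norm at most $\tfrac{1}{\tau}$, whereas the paper simply cites this fact from the literature.
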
        
        
    The second part of the proof follows from induction. Firstly, for $\kappa=0$, the non-expansiveness of the minimax function leads to
    \begin{align}\label{eq:q_lip1}
        |v_{0}^j({z})-&v_{0}^j({\overline{{z}}})| \nn\\
        &\leq  \max_{\mu^{j}} \min_{\mu^{-j}} \;\mathrm{E}[|r^j({z},a^j,a^{-j})-r^j({\overline{z}},a^j,a^{-j})|],
    \end{align}
    where the expectation is taken with respect to $a^j\sim \mu^j$ and $a^{-j}\sim \mu^{-j}$.
    Then, we can use the following lemma showing the Lipschitz continuity of the reward functions $r^j(\cdot)$'s.
    
    \begin{lemma}\label{lemma:reward_lip}
        The reward function $r^j({z},a)$ is Lipschitz continuous in {$z\in Z$}. For each $a\in A$, we have
        \begin{flalign}\label{eq:Lipschitz_reward}
            |r^j({z},a)-r^j({\overline{z}},a)| \leq L_0 \cdot \| \rev{z - \overline{z}} \|_1
        \end{flalign}
        for all ${z,\overline{z} \in Z}$, where $L_0$ is as described in Proposition \ref{pro:value_lip}.
    \end{lemma}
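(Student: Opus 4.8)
The plan is to use the fact that, for each fixed $a\in A$, the map $z\mapsto r^j(z,a)$ is a \emph{linear} functional of the joint action distribution $\piz(z)$. Expanding the expectation in \eqref{eq:rz} as a finite sum over $B$ gives
\be
r^j(z,a)=\sum_{b\in B}\piz(z)(b)\,u^j(a,b),
\ee
so that for any two states $z,\overline{z}\in Z$,
\be
r^j(z,a)-r^j(\overline{z},a)=\sum_{b\in B}\bigl(\piz(z)(b)-\piz(\overline{z})(b)\bigr)u^j(a,b).
\ee
First I would take absolute values and apply the triangle inequality, factoring out the largest payoff magnitude.

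This yields
\be
|r^j(z,a)-r^j(\overline{z},a)|\leq\Bigl(\max_{b\in B}|u^j(a,b)|\Bigr)\sum_{b\in B}\bigl|\piz(z)(b)-\piz(\overline{z})(b)\bigr|=\Bigl(\max_{b\in B}|u^j(a,b)|\Bigr)\,\|\piz(z)-\piz(\overline{z})\|_1.
\ee
Next I would invoke Lemma \ref{lem:pi} to bound $\|\piz(z)-\piz(\overline{z})\|_1\leq\frac{\sqrt{|B|}}{\tau}\|z-\overline{z}\|_1$, and use the crude estimate $\max_{b\in B}|u^j(a,b)|\leq\max_{(\tilde j,a,b)}|u^{\tilde j}(a,b)|$. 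Combining the two gives exactly the asserted inequality with $L_0=\frac{1}{\tau}\sqrt{|B|}\max_{(\tilde j,a,b)}|u^{\tilde j}(a,b)|$, uniformly in $a\in A$.

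There is essentially no obstacle in this reduction: all the real work is absorbed into Lemma \ref{lem:pi}, which I am allowed to assume. (That lemma is where the difficulty actually lives---one has to bound the Jacobian of the product-of-softmaxes map, noting that each single-agent softmax Jacobian has entries $\tfrac{1}{\tau}\sigma(q^i)(b^i)(\mathbf{1}\{b^i=\tilde b^i\}-\sigma(q^i)(\tilde b^i))$, that softmax probabilities lie in $(0,1]$, and that the $\sqrt{|B|}$ factor comes from converting between $\ell_1$ and $\ell_2$ norms over the $|\IN|$ coordinates.) The only mild care needed for the present lemma is to observe that the constant does not depend on $a$, which is immediate since the payoff magnitude is maximized over all of its arguments.
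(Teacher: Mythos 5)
Your proof is correct and follows essentially the same route as the paper: the paper likewise writes $r^j(z,a)-r^j(\overline{z},a)$ as $u^j(a,\cdot)^T(\piz(z)-\piz(\overline{z}))$, applies the H\"{o}lder inequality (your triangle-inequality-plus-max bound is the same step), and then invokes Lemma \ref{lem:pi} together with the definition of $L_0$. Your added remark that the constant is uniform in $a$ is a harmless clarification the paper leaves implicit.
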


    By \eqref{eq:q_lip1}, Lemma \ref{lemma:reward_lip} yields that
    \begin{flalign}
        |v_0^j(z) - v_0^j(\overline{z})| &\leq \max_{\mu^{j}} \min_{\mu^{-j}} \mathrm{E}[|r^j({z},a)-r^j({\overline{z}},a)|] \nn\\&\leq L_0 \cdot \|z-\overline{z}\|_1.
    \end{flalign}
    Next, we focus on $\kappa>0$ and assume that $v_{\kappa-1}^j(\cdot)$ is $L_{\kappa-1}$-Lipschitz with respect to $\|\cdot\|_1$ to show that $v_\kappa^j(\cdot)$ is $L_{\kappa}$-Lipschitz. The non-expansiveness of the minimax function and the triangle inequality yield that
    \begin{flalign}
        &|v_{\kappa}^j(z)-v_{\kappa}^j(\overline{z})| \leq  \max_{\mu^{j}} \min_{\mu^{-j}} \mathrm{E} [|r^j(z,a)-r^j(\overline{z},a)|]\nn \\
        &+ \gamma \max_{\mu^{j}} \min_{\mu^{-j}} \mathrm{E} \left[\left| \int_Z v_{\kappa-1}^j(\tilde{z})p(d\tilde{z}|z,a) - \int_Z v_{\kappa-1}^j(\tilde{z})p(d\tilde{z}|\overline{z},a) \right|\right].\nn
    \end{flalign}
    For the first term on the right-hand side, we can invoke Lemma \ref{lemma:reward_lip}. On the other hand, for the second term, we can use the following lemma.
    
    \begin{lemma}\label{lemma:prob_bound}
        Let $\eta: Z \rightarrow \mathbb{R}$ be bounded $|\eta(z)|\leq M$ for all $z\in  Z$ and $K$-Lipschitz with respect to $\|\cdot\|_1$. Then we have
        \begin{flalign}\label{eq:prob_lip}
            \left| \int_{Z} \eta(\tilde{z}) p(d\tilde{z}|z,a) - \int_{Z} \eta(\tilde{z}) p(d\tilde{z}|\overline{z},a) \right| \leq \overline{K} \cdot \|z-\overline{z}\|_1,
        \end{flalign}
        where $\overline{K} = K + \frac{\sqrt{|B|}}{\tau}\cdot M$.
    \end{lemma}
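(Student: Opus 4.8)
The plan is to exploit the fact that, conditioned on the pair $(z,a)$, the next state is a \emph{deterministic} function of the realized joint action $b$ of the \nts, while $b$ itself is drawn from $\overline{\pi}(z)$, which depends on $z$ only through the softmax \eqref{eq:softmax}. Concretely, write $F(z,a,b)\in Z$ for the state obtained from $z$ by applying the IQL update \eqref{eq:qupdate} to every \nt~$i$ with realized action $b^i$ and payoff $\overline{u}^i(a,b)$. Since the \ats' action $a$ does not affect the sampling law of $b$ (only the resulting payoffs), the transition kernel satisfies $\int_Z \eta(\tilde z)\,p(d\tilde z\mid z,a)=\sum_{b\in B}\overline{\pi}(z)(b)\,\eta(F(z,a,b))$, a finite sum because $B$ is finite. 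Hence the left-hand side of \eqref{eq:prob_lip} is $\bigl|\sum_b \overline{\pi}(z)(b)\,\eta(F(z,a,b))-\sum_b \overline{\pi}(\overline{z})(b)\,\eta(F(\overline{z},a,b))\bigr|$.

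Next I would split this difference by adding and subtracting $\sum_b\overline{\pi}(z)(b)\,\eta(F(\overline{z},a,b))$, producing the two terms $\sum_b\overline{\pi}(z)(b)\,[\eta(F(z,a,b))-\eta(F(\overline{z},a,b))]$ and $\sum_b[\overline{\pi}(z)(b)-\overline{\pi}(\overline{z})(b)]\,\eta(F(\overline{z},a,b))$. For the second term, the bound $|\eta|\le M$ together with the triangle inequality gives at most $M\,\|\overline{\pi}(z)-\overline{\pi}(\overline{z})\|_1$, which is at most $\tfrac{\sqrt{|B|}}{\tau}M\,\|z-\overline{z}\|_1$ by Lemma \ref{lem:pi}.

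For the first term the key observation is that, for fixed $(a,b)$, the map $z\mapsto F(z,a,b)$ is non-expansive in $\|\cdot\|_1$: on the coordinate $b^i$ of each \nt~$i$ the update acts as $q^i(b^i)\mapsto(1-\alpha)q^i(b^i)+\alpha\,\overline{u}^i(a,b)$, so the corresponding coordinate difference is scaled by $1-\alpha\in(0,1)$, while every other coordinate is left unchanged; summing over coordinates yields $\|F(z,a,b)-F(\overline{z},a,b)\|_1\le\|z-\overline{z}\|_1$. Using the $K$-Lipschitz continuity of $\eta$ and $\sum_b\overline{\pi}(z)(b)=1$, the first term is bounded by $K\,\|z-\overline{z}\|_1$. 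Adding the two estimates gives the constant $\overline{K}=K+\tfrac{\sqrt{|B|}}{\tau}M$, as claimed.

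The main obstacle I anticipate is bookkeeping rather than conceptual: one must verify that $F(z,a,b)$ always lands in $Z$, so that the bound $|\eta(F(\overline{z},a,b))|\le M$ is applicable — this is exactly where one uses that $Z=\prod_{i\in\IN}Q^i$ is chosen large enough to be invariant under \eqref{eq:qupdate}, which holds because each update replaces a coordinate by a convex combination of its current value and a bounded payoff — and that the identity $\int_Z\eta\,p(d\tilde z\mid z,a)=\sum_b\overline{\pi}(z)(b)\,\eta(F(z,a,b))$ is a legitimate description of the kernel, which follows since for fixed $(z,a)$ the next state is supported on the at most $|B|$ points $\{F(z,a,b):b\in B\}$ with masses $\overline{\pi}(z)(b)$.
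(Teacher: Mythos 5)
Your proposal is correct and follows essentially the same route as the paper: the same decomposition by adding and subtracting the cross term $\sum_b\overline{\pi}(z)(b)\,\eta(\overline{z}_+^{ab})$, the same non-expansiveness of the deterministic next-state map (the updated coordinate contracts by $1-\alpha$, the rest are unchanged), and the same invocation of Lemma \ref{lem:pi} together with $|\eta|\le M$ for the strategy-difference term. No gaps.
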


    Based on the boundedness of the value function \eqref{eq:vbound} and the assumption that $v_{\kappa-1}^j$ is $L_{\kappa-1}$-Lipschitz, we obtain
    \begin{align}
    |v_{\kappa}^j(z)-v_{\kappa}^j(\overline{z})|&\leq 
    \left(L_0+\gamma\left(L_{\kappa-1} + \frac{\sqrt{|B|}}{\tau}M_{\kappa-1}\right)\right)\cdot\|z-\overline{z}\|_1\nn\\
        &= L_{\kappa}\cdot \|z-\overline{z}\|_1
    \end{align}
    for all $z,\overline{z}\in Z$, where the equality follows from the definitions of $M_{\kappa-1}$, $L_{\kappa}$, and $L_0$ in the proposition statement. This completes the proof.
\end{proof}}

The Lipschitz continuity of the minimax value function is promising to effectively approximate the underlying SG via the value-based approximation methods. Various methods exist for approximating MDPs with large state-action spaces. For instance, state aggregation \cite{ref:Ren02,ref:Ortner07} involves combining similar states with respect to cost and transition probabilities into meta-states. On the other hand, in approximate linear programming \cite{ref:DeFarias03,ref:DeFarias04}, preselected basis functions are used to fit the value function of the MDP. Approximate value iteration and approximate policy iteration \cite{ref:Tsitsiklis96,ref:Bertsekas96,ref:Farahmand10} also use preselected basis functions to determine the value and policy, respectively. Neuro-dynamic programming \cite{ref:Bertsekas96} leveraging the power of approximation structures, such as neural networks, is also an important method to approximate MDPs. 

In this paper, as an \textit{example} proof of concept, we focus on quantization-based approximations (e.g., see \cite{ref:Saldi18} for a detailed review) due to their interpretability by transforming $\mathcal{Z}$ to a finite SG, as in the original definition in \cite{ref:Shapley53}.  To this end, we consider a quantization mapping $\Phi:Z\rightarrow D$ for some finite set $D\subset Z$ such that $Z_d := \{z\in Z: \Phi(z) = d\}$ satisfies $Z_d \cap Z_{d'} = \varnothing$ for all $d\neq d'$ and $\bigcup_{d\in D}Z_d = Z$. In the following, we quantify the approximation error induced by such quantization schemes based on Proposition \ref{pro:value_lip}.

\begin{proposition}
    Given the quantization mapping $\Phi$, assume that there exists $\Delta>0$ such that $\|z-\Phi(z)\|_1\leq \Delta$ for all $z\in Z$. Then, for each $j\in\IA$, we have
     \begin{flalign}\label{eq:result}
         \|v_{\kappa}^j-\vv_{\kappa}^j\|_{\infty} \leq \frac{\Delta\sqrt{|B|}}{\tau(1-\gamma)^3}\max_{(\tilde{j},a,b)}|u^{\tilde{j}}(a,b)|\quad\forall \kappa\geq 0,
     \end{flalign}
     where $v_{\kappa}^j$ is described in \eqref{eq:v}, and 
     \be\label{eq:vhat}
     \widehat{v}_{\kappa}^j(z) \coloneqq \max_{\mu^j}\min_{\mu^{-j}} \;\mathrm{E}_{(a^j,a^{-j})\sim(\mu^j,\mu^{-j})}[\widehat{Q}_k^j(z,a^j,a^{-j})],
     \ee
     where
    \be
        \widehat{Q}_{\kappa}^j(z,a) \coloneqq r^j(\Phi(z),a) + \gamma \int_{Z} \vv_{\kappa-1}^j(\tilde{z})p(d\tilde{z}\mid \Phi(z),a).
    \ee
\end{proposition}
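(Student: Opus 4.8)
The plan is to derive a one-step error recursion for $e_\kappa := \|v_\kappa^j-\vv_\kappa^j\|_\infty$ and then unroll it, using a uniform bound on the Lipschitz constants $L_\kappa$ that is already implicit in Proposition~\ref{pro:value_lip}. Fix $j\in\IA$. Since the value of a finite zero-sum matrix game (equivalently, the operator $\max_{\mu^j}\min_{\mu^{-j}}\mathrm{E}[\cdot]$) is non-expansive in the sup-norm of its payoff matrix --- the same fact invoked in the proof of Proposition~\ref{pro:value_lip} --- for every $z\in Z$ we get $|v_\kappa^j(z)-\vv_\kappa^j(z)|\le \max_{a\in A}|Q_\kappa^j(z,a)-\widehat Q_\kappa^j(z,a)|$, so it suffices to bound $|Q_\kappa^j(z,a)-\widehat Q_\kappa^j(z,a)|$ uniformly in $(z,a)$.

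By the triangle inequality this splits into a reward term $|r^j(z,a)-r^j(\Phi(z),a)|$ and $\gamma$ times a continuation term $\big|\int_Z v_{\kappa-1}^j\,p(d\tilde z\mid z,a)-\int_Z \vv_{\kappa-1}^j\,p(d\tilde z\mid \Phi(z),a)\big|$. The reward term is at most $L_0\|z-\Phi(z)\|_1\le L_0\Delta$ by Lemma~\ref{lemma:reward_lip}. For the continuation term I insert the intermediate integral $\int_Z v_{\kappa-1}^j\,p(d\tilde z\mid \Phi(z),a)$: the difference from the first integral is at most $\big(L_{\kappa-1}+\tfrac{\sqrt{|B|}}{\tau}M_{\kappa-1}\big)\|z-\Phi(z)\|_1\le \big(L_{\kappa-1}+\tfrac{\sqrt{|B|}}{\tau}M_{\kappa-1}\big)\Delta$ by Lemma~\ref{lemma:prob_bound} applied to $\eta=v_{\kappa-1}^j$ (which is $M_{\kappa-1}$-bounded and $L_{\kappa-1}$-Lipschitz by Proposition~\ref{pro:value_lip}), and the difference from the second integral is $\big|\int_Z (v_{\kappa-1}^j-\vv_{\kappa-1}^j)\,p(d\tilde z\mid \Phi(z),a)\big|\le \|v_{\kappa-1}^j-\vv_{\kappa-1}^j\|_\infty = e_{\kappa-1}$. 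I will note here that $\vv_{\kappa-1}^j$ is constant on each cell $Z_d$ (since $\widehat Q^j$ depends on $z$ only through $\Phi(z)$), hence bounded and measurable, so this integral is well defined. Collecting the pieces and using the identity $L_\kappa = L_0 + \gamma\big(L_{\kappa-1}+\tfrac{\sqrt{|B|}}{\tau}M_{\kappa-1}\big)$ established in the proof of Proposition~\ref{pro:value_lip}, I obtain the clean recursion $e_\kappa \le L_\kappa\Delta + \gamma e_{\kappa-1}$, with base case $e_0\le L_0\Delta$ (again from Lemma~\ref{lemma:reward_lip} and non-expansiveness, since $\widehat Q_0^j(z,a)=r^j(\Phi(z),a)$).

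Unrolling gives $e_\kappa \le \Delta\sum_{m=0}^{\kappa}\gamma^{\kappa-m}L_m$. The last ingredient is the uniform bound $L_m\le \tfrac{L_0}{(1-\gamma)^2}$ for all $m\ge 0$, proved by induction on $L_\kappa=\gamma L_{\kappa-1}+L_0\tfrac{1-\gamma^{\kappa+1}}{1-\gamma}$: the base case is $L_0\le \tfrac{L_0}{(1-\gamma)^2}$, and the step uses $\gamma\tfrac{L_0}{(1-\gamma)^2}+\tfrac{L_0}{1-\gamma}=\tfrac{L_0}{(1-\gamma)^2}$. Then $e_\kappa\le \Delta\tfrac{L_0}{(1-\gamma)^2}\cdot\tfrac{1-\gamma^{\kappa+1}}{1-\gamma}\le \tfrac{\Delta L_0}{(1-\gamma)^3}$, and substituting $L_0=\tfrac{\sqrt{|B|}}{\tau}\max_{(\tilde j,a,b)}|u^{\tilde j}(a,b)|$ yields exactly \eqref{eq:result}, uniformly in $\kappa$.

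The difficulty is organizational rather than conceptual: in the continuation term one must change both the integrand ($v_{\kappa-1}^j\to\vv_{\kappa-1}^j$) and the conditioning state ($z\to\Phi(z)$), and one has to see that the $M_{\kappa-1}$-dependent term produced by Lemma~\ref{lemma:prob_bound} recombines with $\gamma L_{\kappa-1}$ and $L_0$ into precisely the recursion defining $L_\kappa$ --- this is what makes the error recursion collapse to $e_\kappa\le L_\kappa\Delta+\gamma e_{\kappa-1}$ and ultimately produces the $(1-\gamma)^{-3}$ factor. The only genuinely technical point to dispatch is the measurability of $\vv_{\kappa-1}^j$ (equivalently, of the quantization cells $Z_d$), which is needed for the continuation integral to make sense.
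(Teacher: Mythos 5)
Your proposal is correct and follows essentially the same route as the paper: the same non-expansiveness reduction, the same insertion of the intermediate integral $\int_Z v_{\kappa-1}^j\,p(d\tilde z\mid\Phi(z),a)$, the same use of Lemmas \ref{lemma:reward_lip} and \ref{lemma:prob_bound} to obtain the recursion $e_\kappa\le L_\kappa\Delta+\gamma e_{\kappa-1}$ and its unrolled form $e_\kappa\le\Delta\sum_{m=0}^\kappa\gamma^{\kappa-m}L_m$. Your inductive uniform bound $L_m\le L_0/(1-\gamma)^2$ is just a tidy way of carrying out the ``some algebra'' the paper leaves implicit (the paper instead writes $L_\kappa=L_0\sum_{\ell=0}^\kappa(\ell+1)\gamma^\ell$ explicitly), and your remark on the measurability of $\vv_{\kappa-1}^j$ is a welcome extra detail rather than a departure.
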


{
\begin{proof}
    We first focus on the case $\kappa=0$. By the definitions of $v_0^j$ and $\vv_0^j$, we have
    \begin{align}
    |v_0^j(z) - \vv_0^j&(z)|= \big|\max_{\mu^{j}} \min_{\mu^{-j}} \mathrm{E}[r^j(z,a^j,a^{-j})]\nn\\
    &\hspace{.5in}-\max_{\mu^{j}} \min_{\mu^{-j}} \mathrm{E}[r^j(\Phi(z),a^j,a^{-j})]\big|\nn\\
    &\leq \max_{\mu^{j}} \min_{\mu^{-j}} \mathrm{E}\left[\left|r^j(z,a^j,a^{-j}) - r^j(\Phi(z),a^j,a^{-j})\right|\right]\nn \\
    &\leq L_0 \cdot \|z-\Phi(z)\|_1\quad\forall z\in Z,\label{eq:vdiff0}
    \end{align}
    where we use the non-expansiveness of the minimax function and Lemma \ref{lemma:reward_lip}.
    Next, we focus on $\kappa>0$. Based on the definitions of $v_{\kappa}^j$ and $\vv_{\kappa}^j$, resp., as described in \eqref{eq:v} and \eqref{eq:vhat}, the non-expansiveness of the minimax function and the triangle inequality yield that
    \begin{flalign}
    |v_{\kappa}^j(z)-\vv_{\kappa}^j(z)| &\leq \max_{\mu^{j}} \min_{\mu^{-j}} \mathrm{E}\left[|r^j(z,a^j,a^{-j})-r^j(\Phi(z),a^j,a^{-j})|\right] \nn\\
    &+ \gamma \max_{\mu^{j}} \min_{\mu^{-j}} \mathrm{E} \Big[\Big|\int_{Z} v_{\kappa-1}^j(\tilde{z})p(d\tilde{z}\mid z,a^j,a^{-j})\nn\\
    &\hspace{.1in} -\int_{Z} \vv_{\kappa-1}^j(\tilde{z})p(d\tilde{z}\mid \Phi(z),a^j,a^{-j})\Big|\Big].\nn
    \end{flalign}
    
    We can bound the first term on the right-hand side by using Lemma \ref{lemma:reward_lip}. Therefore, we focus on the second term. 
    However, in the second term, not only the integrands $v_{\kappa-1}^j$ and $\vv_{\kappa-1}^j$ but also the measures $p(\cdot\mid z,a)$ and $p(\cdot\mid \Phi(z),a)$ are different. To address this issue, we can add and subtract the term $\int_{Z} v_{\kappa-1}^j(\tilde{z})p(d\tilde{z}\mid \Phi(z),a)$. Then, by the triangle inequality, we have
    \begin{align}
    &\left|\int_{Z} v_{\kappa-1}^j(\tilde{z})p(d\tilde{z}\mid z,a)-\int_{Z} \vv_{\kappa-1}^j(\tilde{z})p(d\tilde{z}\mid \Phi(z),a)\right|\nn\\
    &\hspace{.2in}\leq \left|\int_{Z} v_{\kappa-1}^j(\tilde{z})p(d\tilde{z}\mid z,a)-\int_{Z} v_{\kappa-1}^j(\tilde{z})p(d\tilde{z}\mid \Phi(z),a)\right|\nn\\
    &\hspace{.4in}+\left|\int_{Z} (v_{\kappa-1}^j(\tilde{z})-\vv_{\kappa-1}^j(\tilde{z}))p(d\tilde{z}\mid \Phi(z),a)\right|.\label{eq:intvdiff}
    \end{align}
    To bound the first term on the right-hand side of \eqref{eq:intvdiff}, we can invoke Proposition \ref{pro:value_lip} and Lemma \ref{lemma:prob_bound} such that
    \begin{align}
        &\left|\int_{Z} v_{\kappa-1}^j(\tilde{z})p(d\tilde{z}\mid z,a)-\int_{Z} v_{\kappa-1}^j(\tilde{z})p(d\tilde{z}\mid \Phi(z),a)\right|\nn\\
        &\hspace{.8in}\leq \left(L_{\kappa-1} + \frac{\sqrt{|B|}}{\tau} M_{\kappa-1}\right)\|z-\Phi(z)\|_1\nn\\
        &\hspace{.8in}= \left(L_{\kappa-1} + L_0\frac{1-\gamma^{\kappa}}{1-\gamma}\right)\|z-\Phi(z)\|_1,\label{eq:bound1}
        \end{align}
        where the equality follows from the definitions of $L_0$ and $M_{\kappa-1}$. On the other hand, we can bound the second term on the right-hand side of \eqref{eq:intvdiff} by
        \begin{align}
        &\left|\int_{Z} (v_{\kappa-1}^j(\tilde{z})-\vv_{\kappa-1}^j(\tilde{z}))p(d\tilde{z}\mid \Phi(z),a)\right| \nn\\
        &\hspace{1.5in}\leq \|v_{\kappa-1}^j-\vv_{\kappa-1}^j\|_{\infty}.\label{eq:bound2}
        \end{align}
        Then, by Lemma \ref{lemma:reward_lip} and the bounds \eqref{eq:bound1}, \eqref{eq:bound2}, we obtain 
        \begin{flalign}
            |{v}_{\kappa}^j(z)-\vv_{\kappa}^j&(z)| \leq \left(L_0 + \gamma L_{\kappa-1} + \gamma L_0 \frac{1-\gamma^{\kappa}}{1-\gamma}\right)\|z-\Phi(z)\|_1\nn \\
            &\hspace{.4in}+ \gamma \|v_{\kappa-1}^j-\vv_{\kappa-1}^j\|_{\infty}\nn
            \\
            &=L_{\kappa} \cdot \|z-\Phi (z)\|_1 + \gamma \|v_{\kappa-1}^j - \vv_{\kappa-1}^j\|_\infty.
        \end{flalign}
        Given $\Delta \geq \|z-\Phi(z)\|_1$ for all $z$, we have
        \begin{subequations}\label{eq:v_inf_bound} 
        \begin{flalign}
            &\|v_0^j-\vv_0^j\|_{\infty}\leq L_0 \cdot \Delta\\
            &\|{v}_{\kappa}^j-\vv_{\kappa}^j\|_\infty \leq L_{\kappa} \cdot \Delta + \gamma \|v_{\kappa-1}^j - \vv_{\kappa-1}^j\|_\infty\quad\forall \kappa>0
        \end{flalign}
        \end{subequations}
        The recursion \eqref{eq:v_inf_bound} on $\|v_{\kappa}^j - \vv_{\kappa}^j\|_\infty$ implies that
        \be\label{eq:vvbound}
        \|v_{\kappa}^j - \vv_{\kappa}^j\|_\infty \leq \Delta \cdot \sum_{k=0}^{\kappa} \gamma^{\kappa-k} L_{k}.
        \ee
        On the other hand, the recursion on $\{L_{\kappa}\}_{\kappa\geq 0}$ implies that 
        \begin{align}\label{eq:Lbound}
        L_{\kappa} = L_0\sum_{\ell = 0}^{\kappa}(\ell+1)\gamma^{\ell}.
        \end{align}
        After some algebra, \eqref{eq:vvbound} and \eqref{eq:Lbound} yield \eqref{eq:result}.
\end{proof}}




\begin{figure}[t!]
    \centering
    \includegraphics[width=.9\columnwidth]{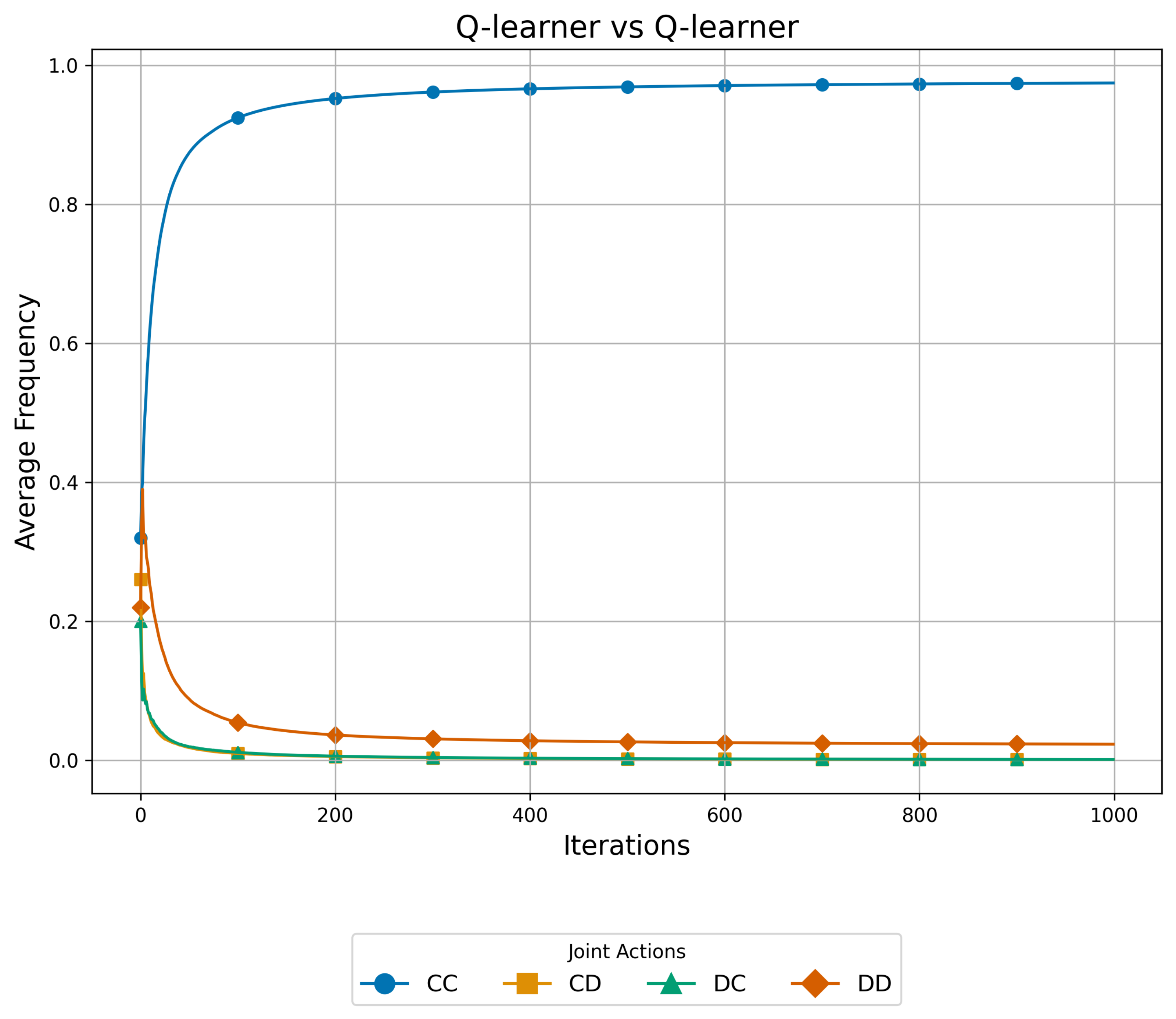}
    \caption{The evolution of the empirical averages of the action profiles for Q-learner vs Q-learner in the prisoner's dilemma.}
    \label{fig:pd1}
\end{figure}

\begin{figure}[t!]
    \centering
    \includegraphics[width=.9\columnwidth]{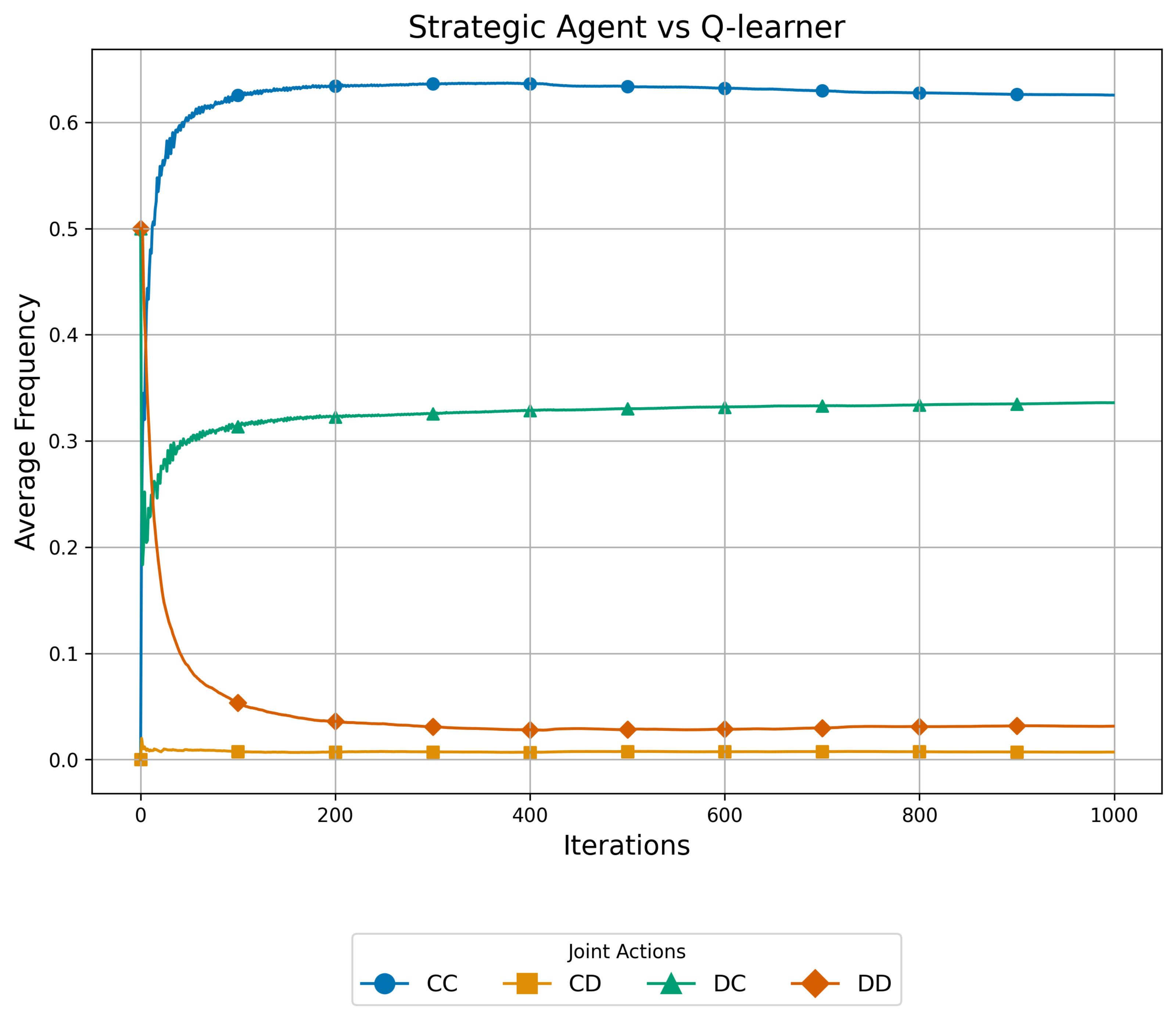}
    \caption{The evolution of the empirical averages of the action profiles for strategic actor vs Q-learner in the prisoner's dilemma.}
    \label{fig:pd2}
\end{figure}

\begin{figure}[t!]
    \centering
    \includegraphics[width=.9\columnwidth]{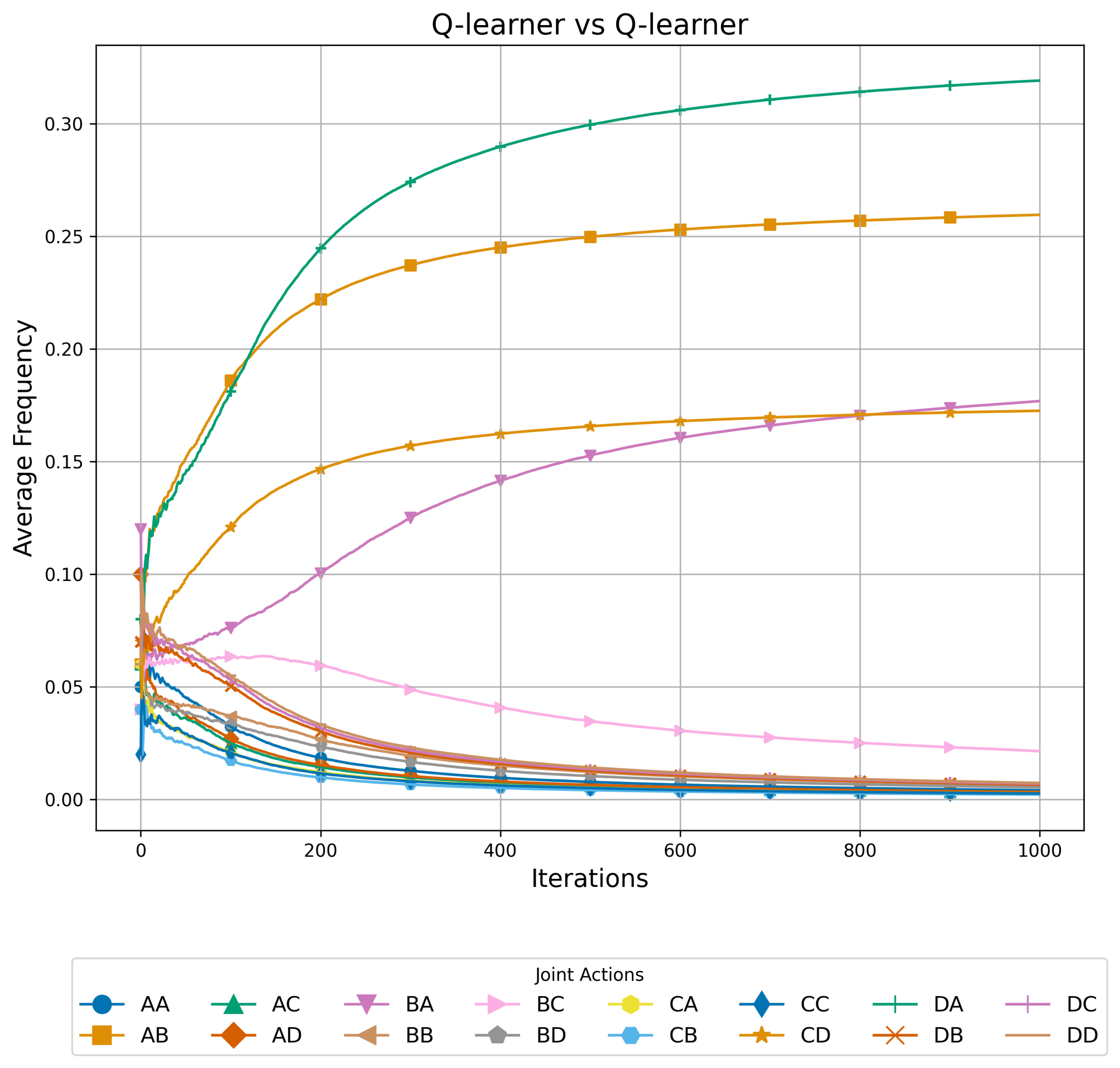}
    \caption{The evolution of the empirical averages of the action profiles for Q-learner vs Q-learner in the two-agent four-action zero-sum game.}
    \label{fig:zero-sum1}
\end{figure}

\begin{figure}[t!]
    \centering
    \includegraphics[width=.9\columnwidth]{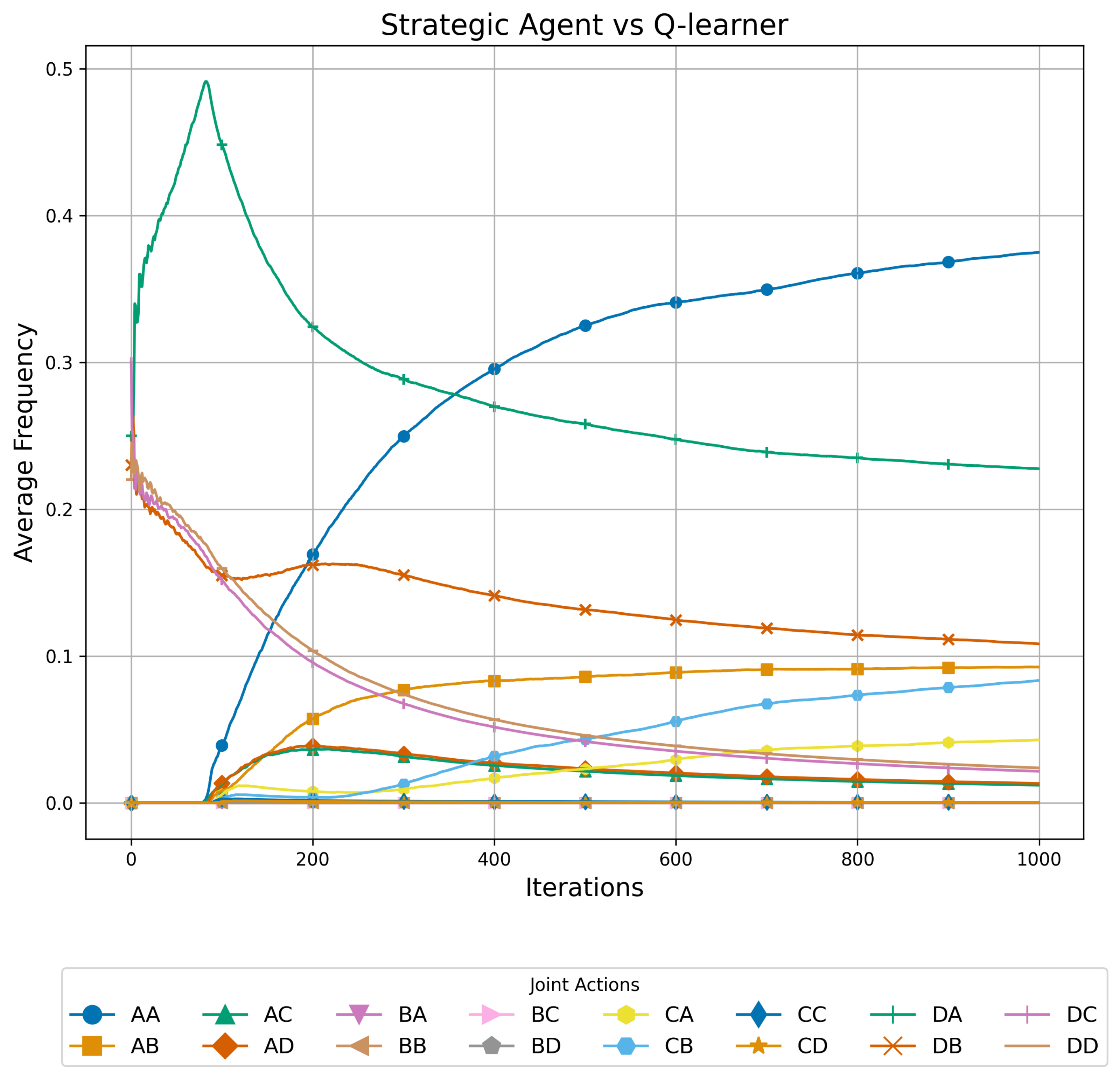}
    \caption{The evolution of the empirical averages of the action profiles for strategic actor vs Q-learner in the two-agent four-action zero-sum game.}
    \label{fig:zero-sum2}
\end{figure}

\section{ILLUSTRATIVE EXAMPLES}\label{sec:example}

We illustrate the performance of \ats~and \nts~interacting with each other in a diverse set of scenarios: $(i)$ a single \at~against a single \nt, $(ii)$ a single \at~against multiple \nts, and $(iii)$ two competing \ats~against a single \nt. We set the parameters of the IQLs deployed by \nts~as $\tau=0.01$ and $\alpha = 0.05$. We set $\gamma = 0.8$ for the utility \eqref{eq:aliceobjective1}. We compute the Markov stationary strategy of \ats~based on the quantization scheme described in Section \ref{sec:approximation}. We quantize the entries of the q-functions to $20$ uniform intervals for the scenarios with two \nts~and $100$ uniform intervals for the rest of the experiments. We run $100$ independent trials and present the mean empirical average of the action profiles as in \cite{ref:Banchio22}. In all these scenarios, \ats~collected \textit{more} payoff (and, therefore, attained higher utility) on \textit{average}, compared to the cases where they also deploy IQLs. 

\textbf{Single \at~against Single \nt.} We focus on the prisoners' dilemma game due to its importance for tacit collusion. Furthermore,  we study 
the performance of the proposed quantization-based method in a randomly generated two-agent zero-sum game where each agent has four actions. In the prisoner's dilemma, under the strategic act of \at, the empirical frequency of cooperation
decreased from $95\%$ to $60\%$ since the \at~pushed the \nt~to play ``C" when she plays ``D", as illustrated in Figs. \ref{fig:pd1} and \ref{fig:pd2}. Therefore, the \at's strategic act exploits \nt's IQL~and increases \at's utility \eqref{eq:aliceobjective1} from $0.1041$ to $0.2051$.\footnote{The utilities are normalized with $(1-\gamma)$ for $10^3$ iterations.} 
Similarly, in the zero-sum game (see Figs. \ref{fig:zero-sum1} and \ref{fig:zero-sum2}), \at~achieved the utility of $0.5340$ when she acted strategically, which is higher than the value $0.4465$ when she deployed IQL. 

When the number of players is increased to three, \at~continues to obtain a higher value compared to the cases where she also deploys IQLs. 

\textbf{Single \at~against Multiple \nts.} We perform experiments
on a single \at~and two \nts~for the cases where the \at~has aligned and misaligned objectives with \nts~and \nts~play a potential game given any action of \at. For the aligned objectives, we observe that when \at~acts strategically by solving the MDP (see Remark \ref{rem:single}), the average value of all agents increases, indicating that \at~drives \nts~to higher payoffs. For the misaligned objectives, the \at's payoff is the opposite of the \nts' potential function. We observe that the \at~can exploit \nts~for its own benefit while \nts~suffer from the \at's strategic act. These results, depicted in Table \ref{tab:potential_games}, quantify how strategically sophisticated agents can manipulate naive agents in potential games both positively and negatively depending on the alignments of their objectives.

\begin{table}[t!]
    \centering
    \renewcommand{\arraystretch}{1.5} 
            \caption{Single \at~vs Multiple \nts}
    \begin{tabular}{c|c|c|c|c}
    & & \textbf{Agent Type} & \textbf{Learning Dynamics} & \textbf{Average Value} \\ \hline\hline
    \multirow{6}{*}{\rotatebox{90}{Aligned Obj.}} & \multirow{3}{*}{\rotatebox{90}{N$\times$N$\times$N}}& N-type & IQL & 0.1593 \\ \cline{3-5}
    && N-type & IQL & 0.1593 \\ \cline{3-5}
    && N-type & IQL & 0.1593 \\ \cline{2-5}
    &\multirow{3}{*}{\rotatebox{90}{N$\times$N$\times$A}}& A-type & DP & 0.2396 \\ \cline{3-5}
    && N-type & IQL & 0.2396 \\ \cline{3-5}
    && N-type & IQL & 0.2396 \\ \hline\hline
    \multirow{6}{*}{\rotatebox{90}{Misaligned Obj.}} & \multirow{3}{*}{\rotatebox{90}{N$\times$N$\times$N}} & N-type & IQL & -0.1470 \\ \cline{3-5}
    && N-type & IQL & 0.1470 \\ \cline{3-5}
    && N-type & IQL & 0.1470 \\ \cline{2-5}
    &\multirow{3}{*}{\rotatebox{90}{N$\times$N$\times$A}} & A-type & DP & -0.0810 \\ \cline{3-5}
    && N-type & IQL & 0.0810 \\ \cline{3-5}
    && N-type & IQL & 0.0810 
    \end{tabular}
    \label{tab:potential_games}
\end{table}
\begin{table}[t!]
    \centering
    \renewcommand{\arraystretch}{1.57}
            \caption{Two Competing \ats~vs Single \nt}
    \begin{tabular}{c|c|c|c|c}
    & \textbf{Agent Type} & \textbf{Payoffs} & \textbf{Learning Dynamics} & \textbf{Average Value} \\ \hline\hline
    \multirow{3}{*}{\rotatebox{90}{N$\times$N$\times$N}} & N-type & $\widetilde{U}$ & IQL & 0.4280 \\ \cline{2-5}
    & N-type & $U$ & IQL & 0.3640 \\ \cline{2-5}
    & N-type & $-U$ & IQL & -0.3640 \\ \hline\hline
    \multirow{3}{*}{\rotatebox{90}{N$\times$A$\times$N}} & N-type & $\widetilde{U}$ & IQL & 0.4875 \\ \cline{2-5}
    & A-type & $U$ & DP & 0.4995 \\ \cline{2-5}
    & N-type & $-U$ & IQL & -0.4995 \\ \hline\hline
    \multirow{3}{*}{\rotatebox{90}{A$\times$A$\times$N}} & N-type & $\widetilde{U}$ & IQL & 0.4854 \\ \cline{2-5}
    & A-type & $U$ & Minimax-DP & 0.5092 \\ \cline{2-5}
    & A-type & $-U$ & Minimax-DP & -0.5092 
    \end{tabular}
    \label{tab:non_potential_games}
\end{table}

\textbf{Two Competing \ats~against Single \nt.}
We let \nt~have an arbitrary payoff matrix, denoted by $\widetilde{U}$, while \ats~have opposite objectives, denoted by $U$ and $-U$. In Table \ref{tab:non_potential_games}, we first report the average values for the case N$\times$N$\times$N, where all agents adopt IQLs. Then, we demonstrate the effect of a strategically sophisticated agent by allowing one to have \at, i.e., act strategically by solving the MDP via DP. Note that \nts~do not play a potential game now given any action of \at, which is different from the example above. Depending on the alignment of their objectives with \at, the strategic act of \at~has positive and negative impacts on them.
 Lastly, we show the impact of competing \ats~deploying minimax-DP to compute their minimax-equilibrium strategy for the two-agent zero-sum SG $\mathcal{Z}$. Interestingly, \nt~benefits from the competition between \ats~when they act strategically. We attribute this to the alignment of \nt's payoff with \at~achieving a higher payoff in A$\times$A$\times$N than N$\times$N$\times$N. Hence, we can observe such intriguing outcomes for competing strategic actors.
 
\section{CONCLUSION}\label{sec:discussion}

We addressed how strategically sophisticated agents can strategize against naive opponents following independent Q-learning algorithms in normal-form games played repeatedly if the strategic actors know the deployed Q-learning algorithms and the underlying game payoffs. We modeled the strategic actors' interactions as an SG as if the Q-learning algorithms are the underlying dynamic environment. However, this resulted in an SG with a continuum state space. We demonstrated the value function specific to Q-learning updates is a Lipschitz continuous function of the continuum state, which is promising for effective approximations via value-based methods. As an example proof of concept, we focused on quantization-based approximation to reduce the problem to a finite SG that can be solved via minimax value iteration when there are two competing strategic actors. We also analytically quantified the approximation level and numerically examined the quantization scheme's performance. 

This work paves the way for understanding the vulnerabilities of learning algorithms against strategic actors from a control-theoretical perspective so that we can design algorithms used in the wild reliably. Possible research directions include \textit{(i)} understanding the vulnerabilities of other widely used algorithms, \textit{(ii)} reducing the capabilities of the strategic actors, and \textit{(iii)} using other approximation methods.

{\appendices
\section{Proof of Lemma \ref{lem:pi}}

Based on the definition \eqref{eq:pizz}, we have
\begin{flalign}
\|\overline{\pi}(z) - \overline{\pi}(\overline{z})&\|_1 = \sum_{b\in B} |\overline{\pi}(z)(b) - \overline{\pi}(\overline{z})(b)| \nn\\
&=\sum_{b\in B}\left|\prod_{i\in\IN} \sigma(q^i)(b^i) - \prod_{i\in\IN}\sigma(\overline{q}^i)(b^i)\right|\nn\\
&\stackrel{(a)}{\leq} \sum_b \left(\prod_{i\neq \ell} \sigma(q^i)(b^i)\right)|\sigma(q^\ell)(b^{\ell}) - \sigma(\overline{q}^{\ell})(b^{\ell})|\nn\\
&\;\;\;\;+\sum_{b}\sigma(\overline{q}^\ell)(b^{\ell})\left|\prod_{i\neq \ell}\sigma(q^i)(b^i) - \prod_{i\neq \ell}\sigma(\overline{q}^i)(b^i)\right|\nn\\
&\stackrel{(b)}{=}\|\sigma(q^\ell)-\sigma(\overline{q}^\ell)\|_1 \nn\\
&\;\;\;\;+ \sum_{b^{-\ell}}\left|\prod_{i\neq \ell} \sigma(q^i)(b^i) - \prod_{i\neq \ell}\sigma(\overline{q}^i)(b^i)\right|,\label{eq:pilong}
\end{flalign}
where $(a)$ follows from the triangle inequality and by adding and subtracting $\left(\prod_{i\neq \ell} \sigma(q^i)(b^i)\right)\sigma(\overline{q}^\ell)(b^\ell)$ for some $\ell\in\IN$, and $(b)$ follows from the fact that the softmax, as described in \eqref{eq:softmax}, is a probability distribution. By following similar lines with \eqref{eq:pilong}, we can show that
\begin{flalign}
\|\overline{\pi}(z) - \overline{\pi}(\overline{z})\|_1 &\leq \sum_{i\in\IN}\|\sigma(q^i)-\sigma(\overline{q}^i)\|_1\nn\\
&\leq \sum_{i}\sqrt{|B^i|} \cdot \|\sigma(q^i)-\sigma(\overline{q}^i)\|_2\nn\\
&\stackrel{(a)}{\leq} \frac{1}{\tau}\sum_{i}\sqrt{|B^i|}\cdot \|q^i-\overline{q}^i\|_2\nn\\
&\leq \frac{\sqrt{|B|}}{\tau} \sum_{i\in\IN}\|q^i-\overline{q}^i\|_1\nn\\
&= \frac{\sqrt{|B|}}{\tau} \cdot \|z-\overline{z}\|_1,
\end{flalign}
where $(a)$ follows from the $\frac{1}{\tau}$-Lipschitz continuity of the softmax function $\sigma(\cdot)$ with respect to $\|\cdot\|_2$, e.g., see \cite[Proposition 4]{ref:Gao17}.

\section{Proof of Lemma \ref{lemma:reward_lip}}
By \eqref{eq:rz}, the left-hand side of \eqref{eq:Lipschitz_reward} can be written as
            \begin{flalign}
                |r^j(z,a)-r^j(\overline{z},a)| &= 
                | u^j(a,\cdot)^T (\piz(z)-\piz(\overline{z})) |\nn\\
                &\stackrel{(a)}{\leq} \|u^j(a,\cdot)\|_\infty \|\piz(z)-\piz(\overline{z})\|_1 \nn\\
                &\stackrel{(b)}{\leq} L_0 \cdot \|z-\overline{z}\|_1, \label{eq:Lipschitz_softmax}
            \end{flalign}
        where $(a)$ follows from the H\"{o}lder inequality, and $(b)$ follows from Lemma \ref{lem:pi} and the definition of $L_0$. 
        
\section{Proof of Lemma \ref{lemma:prob_bound}}

By the definition of the transition kernel $p(\cdot| \cdot)$, as described in Problem \ref{prob:mdp}, we can write the left-hand side of the inequality \eqref{eq:prob_lip} as 
        \begin{flalign}\label{eq:prob_lip_2}
            \Xi := \left| \sum_b \piz(z)(b) \cdot \eta(z_+^{ab}) - \sum_b \piz(\overline{z})(b) \cdot \eta(\overline{z}_+^{ab}) \right|,
        \end{flalign}
        where $z_+^{ab}=\{q_+^{i,ab}\}_{i\in\IN}$ is the next state given the current state and the action profile $(a,b)$.
        Correspondingly, $\overline{z}_+^{ab}=\{\overline{q}_+^{i,ab}\}_{i\in\IN}$ is the next state given the corresponding state $\overline{z}$ and the action profile $(a,b)$.
        
If we add and subtract $\sum_b \piz(z)(b) \cdot \eta(\overline{z}_+^{ab})$, the triangle inequality yields that
            $\Xi \leq \Xi_1 + \Xi_2$,
        where 
        \begin{subequations}
            \begin{flalign}
                &\Xi_1 \coloneqq \left| \sum_b \piz(z)(b) \cdot \left(\eta(z_+^{ab})-\eta(\overline{z}_+^{ab})\right) \right|,\\
                &\Xi_2 \coloneqq \left| \sum_b \Big(\piz(z)(b) - \piz(\overline{z})(b)\Big) \cdot \eta(\overline{z}_+^{ab}) \right|.
            \end{flalign}
        \end{subequations}
Then, we can bound $\Xi_1$ by             
\begin{flalign}
                \Xi_1 &\leq \sum_b \piz(z)(b) \cdot \left| \eta(z_+^{ab})-\eta(\overline{z}_+^{ab})\right|\nn\\
                &\stackrel{(a)}{\leq} K \sum_b \piz(z)(b) \cdot \|z_+^{ab}-\overline{z}_+^{ab}\|_1 \nn\\
                &\stackrel{(b)}{\leq} K\cdot \|z-\overline{z}\|_1,\label{eq:xi1}
            \end{flalign}
            where $(a)$ follows from the $K$-Lipschitz continuity of $\eta(\cdot)$ from the statement of the lemma, and $(b)$ follows from $\|\piz(z)\|_1=1$ and the fact that
            \begin{align}
        &\|z_+^{ab}-\overline{z}_+^{ab}\|_1 = \sum_{i\in\IN}\sum_{b^i\in B^i}|q_+^{i,ab}(b^i) - \overline{q}_+^{i,ab}(b^i)|\nn\\
        &=\sum_{i\in\IN}\left((1-\alpha)|q^i(b^i)-\overline{q}^i(b^i)| + \sum_{\tilde{b}^i\neq b^i} |q^i(\tilde{b}^i)-\overline{q}^i(\tilde{b}^i)|\right)\nn\\
        &\leq \sum_{i\in\IN}\|q^i-\overline{q}^i\|_1.\label{eq:qplusdiff}
        \end{align}
        On the other hand, we can bound $\Xi_2$ by
            \begin{flalign}
                \Xi_2 &\stackrel{(a)}{\leq} \|\piz(z)-\piz(\overline{z})\|_1\cdot  \max_{b\in B}|\eta(\overline{z}_+^{ab})|\nn\\
                &\stackrel{(b)}{\leq} \frac{\sqrt{|B|}}{\tau} M \cdot \|z-\overline{z}\|_1.\label{eq:xi2} 
            \end{flalign}
        where $(a)$ follows from the H\"{o}lder inequality and $(b)$ follows from Lemma \ref{lem:pi} and the boundedness of $\eta(\cdot)$ from the statement of the lemma. Combining \eqref{eq:xi1} and \eqref{eq:xi2} lead to \eqref{eq:prob_lip}.}

\bibliographystyle{IEEEtran}
\bibliography{mybib}

\end{document}